\renewcommand{\qedsymbol}{\rule{0.7em}{0.7em}}
\def\ba{\begin{equation}}
	\def\ea{\end{equation}}
\def\bea{\begin{eqnarray}}
	\def\eea{\end{eqnarray}}
\def\ben{\begin{equation*}}
	\def\een{\end{equation*}}
\def\bean{\begin{eqnarray*}}
	\def\eean{\end{eqnarray*}}
\def\bma{\begin{mathletters}}
	\def\ema{\end{mathletters}}
\def\bi{\begin{itemize}}
	\def\ei{\end{itemize}}
\newcommand{\be}{\begin{equation}}
	\newcommand{\ee}{\end{equation}}
\newcommand{\ket}[1]{\ensuremath{|#1\rangle}}
\newcommand{\bra}[1]{\ensuremath{\langle#1|}}
\newcommand{\kommentar}[1]{}
\newcommand{\forget}[1]{}
\newtheorem{definition}{Definition}
\newtheorem{observation}{Observation}
\newtheorem{theorem}{Theorem}
\newtheorem{property}{Property}
\newtheorem{lemma}{Lemma}
\pgfplotsset{compat=1.18}
\begin{document}
    \title{
    Fidelity of entanglement and quantum entropies: unveiling their relationship in quantum states and channels
    }
    \author{Komal Kumar}
    \email{p20210063@hyderabad.bits-pilani.ac.in}
    \affiliation{Department of Mathematics, Birla Institute of Technology and Science Pilani, Hyderabad Campus, Jawahar Nagar, Kapra Mandal, Medchal District, Telangana-500078, India}
    
    \author{Bivas Mallick}
    \email{bivasqic@gmail.com}
    \affiliation{S. N. Bose National Centre for Basic Sciences, Block JD, Sector III, Salt Lake, Kolkata, 700 106, India}
    
    \author{Tapaswini Patro}
    \email{tapaspatro44@gmail.com}
    \affiliation{ Department of Physics, Indian Institute of Technology Hyderabad, Kandi, Sangareddy, Telangana 502285, India}
    \affiliation{ Society for Electronic Transaction and Security (SETS), MGR Knowledge City, CIT Campus, Taramani, Chennai, Tamil Nadu-600113, India}

    \author{Nirman Ganguly}
    \email{nirmanganguly@hyderabad.bits-pilani.ac.in, nirmanganguly@gmail.com}
    \affiliation{Department of Mathematics, Birla Institute of Technology and Science Pilani, Hyderabad Campus, Jawahar Nagar, Kapra Mandal, Medchal District, Telangana-500078, India}
	%########################################
	
        \begin{abstract}
        
Entanglement serves as a fundamental resource for various quantum information processing tasks. Fidelity of entanglement (which measures the proximity to a maximally entangled state)  and various quantum entropies are key indicators for certifying entanglement in a quantum state. Quantum states with high fidelity are particularly useful for numerous information-theoretic applications. Similarly, states possessing negative conditional entropy provide significant advantages in several quantum information processing protocols. In this work, we examine the relationship between these two indicators of entanglement, both in state and channel regimes. First, we present a comprehensive analysis and characterization of channels that reduce fidelity of entanglement beyond a threshold limit of bipartite composite systems. In this context, we introduce the notion of fidelity annihilating channel and discuss its topological characterization, along with various information-theoretic properties. We then provide a comparison between channels that diminish the fidelity of entanglement and negative conditional entropies, using the depolarizing channel as an illustrative example. In particular, we determine the parameter regimes in which the depolarizing channel belongs to a given family and establish connections among these families of channels.  Extending our analysis from channels to the state level, we further examine the relationship between the fidelity of entanglement and various quantum entropies for general two-qubit states.  We derive the upper bound on Rényi $2$-entropy, conditional Rényi $2$-entropy, Tsallis $2$-entropy, and conditional Tsallis $2$-entropy, in terms of the fidelity of entanglement. Finally, we explore the relationship between relative entropy and the fidelity of entanglement of a two qudit quantum state.
        \end{abstract}
        %{\color{red}{As an illustrative example, we employ the depolarizing channel to determine its parameter ranges for which it belongs to a given family and to establish connections among different families of channels.}
	%\date{\today}
	\maketitle

    \textbf{Keywords:} 
    Quantum Channels, Fidelity of entanglement, Quantum entropy, Relative entropy
	%\pacs{03.65.Ud, 02.50.Le, 03.67.Ac}
	%\pacs{03.65.Ud, 03.67.Mn
    \section{Introduction}\label{intro}
        Quantum entanglement constitutes a key concept in quantum theory, underpinning the development and advancement of the rapidly evolving domain of quantum technologies \cite{peres1997quantum,bohr1935can,einstein1935can,horodecki2009quantum}. Extensive research has been dedicated to the study of quantum entanglement, which offers significant advantages in various quantum information processing tasks \cite{ekert1991quantum,branciard2012one,pati2000minimum,hillery1999quantum,cleve1999share,bandyopadhyay2000teleportation,bennett1993teleporting}. Despite several diverse and interesting applications of entanglement, it is essential to certify entanglement in a quantum state \cite{lewenstein2001characterization,terhal2000bell,guhne2009entanglement,ganguly2009witness,ganguly2014witness,ganguly2013common,mallick2024genuine,mallick2025higher,mallick2024genuine,mukherjee2025efficient}, since, not all entangled states can enhance the effectiveness of information processing tasks. Quantum conditional entropies \cite{cerf1999quantum} and fidelity of entanglement \cite{horodecki1999general} (also known as singlet fraction or fully entangled fraction) serve as crucial indicators of entanglement, offering deeper insights into the potential of quantum states as resources for various quantum technologies.
     
         Entropy captures the general notion of the amount of uncertainty or randomness of random variables corresponding to their probability distribution. In other words, it is the expectation of the information content of the random variable. This concept is formalized through Shannon entropy, which serves as a standard measure and can be extended to other forms such as Rényi $\alpha$-entropy \cite{renyi1961measures, linden2013structure} and Tsallis $\alpha$-entropy \cite{tsallis1988possible}.  Moving from the classical to the quantum domain, this notion is captured by quantum entropy \cite{muller2013quantum, von2018mathematical}, which reflects both classical and quantum uncertainty in a quantum state.
         Several quantum entropy inequalities, including entropic Bell inequalities \cite{cerf1997entropic}, entropic steering inequalities \cite{schneeloch2013einstein, costa2018entropic}, and the entropic uncertainty relation \cite{wehner2010entropic}, are established using quantum entropies. Conditional entropies are always non-negative in classical systems. However, in the quantum context, they can become negative \cite{cerf1997negative}, representing a substantial deviation from classical principles. While states with negative conditional entropy are entangled, the converse is not true i.e., there are entangled states which have a non-negative conditional entropy. Several studies have demonstrated the operational interpretation and utility of conditional quantum entropies as a resource through quantum protocols \cite{vempati2021witnessing, vempati2022unital, brandsen2021quantum, patro2017non,bennett1992communication, bruss2004distributed, prabhu2013exclusion,horodecki2005partial, horodecki2007quantum}. Another important quantifier for certifying the correlation between quantum systems is the fidelity of entanglement. Fidelity of entanglement measures the proximity to a maximally entangled state. Quantum states in $d \otimes d$ systems, whose fidelity of entanglement is greater than $\frac{1}{d}$ act as resource, and have operational implications in several quantum information processing tasks such as quantum key distribution \cite{curty2004entanglement}, quantum teleportation \cite{bennett1993teleporting}, entanglement swapping \cite{zukowski1993event}, quantum cryptography \cite{gisin2002quantum}, remote state preparation \cite{bennett2001remote, shi2002remote, ye2004faithful} etc. Recently, substantial advancements have been made in investigating the properties of the fidelity of entanglement within various aspects of quantum technologies \cite{rui2010estimation,li2008upper,huang2016upper,zhao2015maximally,Zhao_2010,grondalski2002fully,patro2022absolute,ganguly2011entanglement,cavalcanti2013all,ghosal2025repeater}.

 Quantum state that is essential for the success of a specific quantum protocol is termed as a resource, whereas a quantum state that does not contribute to the protocol is referred to as a free state. Quantum resources undergo physical evolution and interact with their environment, which diminishes their effectiveness as resources, transforming them into free states. This evolution and interaction with the environment are mathematically described by a completely positive, trace-preserving map, known as a quantum channel. In \cite{pal2015non, luo2022coherence, heinosaari2015incompatibility, ku2022quantifying, horodecki2003entanglement, moravvcikova2010entanglement, chruscinski2006partially, patro2024quantum, mallick2024characterization}, comprehensive research have been performed on different quantum channels, with a focus on resource-theoretic characteristics. In the resource theory of negative conditional entropy, states characterized by negative conditional entropy are termed as resources, while states with non-negative conditional entropy are known as free states. Quantum channels that destroy the negativity of conditional entropy across the bipartition are known as negative conditional entropy breaking channels \cite{srinidhi2024quantum,muhuri2023information}. Whereas negative conditional entropy annihilating channels are those that destroy the negative conditional entropy within a subsystem \cite{srinidhi2024quantum}. On a different note, the quantum states with the fidelity of entanglement greater than some threshold are useful for teleportation, i.e., act as resources. Quantum channels that diminish the fidelity of entanglement beyond the threshold are known as fidelity breaking channels $(\mathbb{FBC})$, where fidelity means entanglement fidelity. It is worth mentioning that determining whether a quantum channel is entanglement-breaking, requires only evaluating its action on a maximally entangled state \cite{horodecki2003entanglement}. This significantly simplifies the characterization of such channels. However, this property does not hold true for fidelity-breaking or negative conditional entropy-breaking channels, hence a more comprehensive analysis over a broader set of input states is required for these channels. In this regard, several interesting properties of these channels have been previously studied in \cite{srinidhi2024quantum,muhuri2023information}. In this work, we extend that study by providing deeper insights into some additional properties of fidelity breaking channels. Furthermore, we introduce the concept of fidelity annihilating channels, providing a topological characterization and an in-depth investigation of their features. Furthermore, we present examples of these channels by analyzing the depolarizing channel for both qubit and qutrit systems. We then investigate the relationship between fidelity breaking and annihilating channels with negative conditional entropy breaking and annihilating channels.
 
 \par Motivated by the above connection between channels diminishing the fidelity of entanglement and negative quantum conditional entropies (QCE), as well as the fact that fidelity of entanglement itself is a crucial indicator of entanglement in a quantum state, we aim to explore the relationship between these two fundamental quantifiers of quantum correlations at the state level. In recent studies, the relation between quantum nonlocality and negative quantum conditional entropy (QCE) has been explored \cite{friis2017geometry, kumar2023quantum}. Specifically, in \cite{friis2017geometry}, the authors examined the relationship between the Bell-CHSH inequality \cite{clauser1969proposed} and QCE. Later in \cite{kumar2023quantum}, the relation between the CJWR steering inequality \cite{cavalcanti2009experimental} and QCE has been explored. More recently, in \cite{kumar2024quantum}, authors have explored this relationship for specific classes of states, such as Weyl states and Werner states. In this work, we extend this analysis by examining the analytical connection between quantum entropies and the fidelity of entanglement for a general two-qubit state. We show that R\'enyi $2$-entropy of the general two-qubit state is bounded above provided that its fidelity of entanglement is greater than $\frac{1}{2}$, and vice versa. Also, for the general two-qubit state, we obtain an upper bound on the conditional R\'enyi $2$-entropy, which serves as the necessary and sufficient condition for its fidelity of entanglement to be greater than $\ \frac {1}{2}$. We also observe that the min-R\'enyi entropy and conditional min-R\'enyi entropy of the general two-qubit state are both upper-bounded whenever its fidelity of entanglement exceeds $\frac{1}{2}$. Additionally, we obtain the upper bounds for the Tsallis $2$-entropy that is a necessary and sufficient condition for its fidelity of entanglement to be greater than $\frac{1}{2}$, and the same holds for the conditional Tsallis $2$-entropy of general two-qubit states. Finally, we extend our study to explore the interplay between relative entropy and the fidelity of entanglement in $d \otimes d$ systems.
 \par The article is structured as follows: In Section \ref{II}, we present a concise overview of the key preliminaries related to the Bloch-Fano decomposition of density matrices, quantum entropies, and fidelity of entanglement, along with discussions on fidelity breaking channels and negative conditional entropy-breaking channels. In Section \ref{III}, we analyze the properties of fidelity breaking channels and introduce the concept of fidelity annihilating channels. We further examine their topological characterization and explore the relationships between fidelity breaking channels and annihilating channels, as well as their connection to negative conditional entropy-breaking and annihilating channels. We then examine the relation between quantum entropies, quantum conditional entropies, and fidelity of entanglement for the general two-qubit state in section \ref{IV}. Section \ref{V} then explores the relationship between relative entropy and fidelity of entanglement in $  d\otimes d$ systems. Finally, section \ref{VI} provides concluding remarks and discusses potential directions for future research.

		\section{Preliminaries}\label{II}

        \begin{table}[h]
    \centering
    \renewcommand{\arraystretch}{1.3} % Adjust row spacing
    \setlength{\tabcolsep}{15pt}      % Adjust column spacing
    
    \begin{tabular}{|c|c|}
        \hline
        \textbf{Description} & \textbf{Notations}  \\
        \hline
        Set of quantum states & $\mathcal{D}(\mathcal{H}_{AB})$  \\ 
          \hline
        Identity matrix & $\mathbb{I}$  \\ 
        \hline
        Identity map & $\mathcal{I}$  \\ 
        \hline
        Quantum channel & $\mathcal{N}$  \\ 
        \hline
        Trace norm & $||\cdot||_{1}$  \\ 
        \hline
        Frobenius norm & $||\cdot||_{2}$  \\ 
        \hline
        Operator norm & $||\cdot||_{O}$  \\ 
        \hline
        Set of fidelity annihilating channels & $\mathbb{FAC}$  \\ 
        \hline
        Set of fidelity breaking channels & $\mathbb{FBC}$  \\
        \hline
        Set of negative conditional entropy breaking channels & $\mathbb{NCEBC}$  \\
        \hline
           Set of negative conditional entropy annihilating channels & $\mathbb{NCEAC}$ \\
        \hline
        Set of entanglement breaking channels & $\mathbb{EB}$\\
        \hline
        Set of entanglement annihilating channels & $\mathbb{EA}$\\
        \hline
    \end{tabular}
    
    \captionsetup{position=bottom} % Ensure caption is below
    \caption{The table presents the various notations used throughout this work}
    \label{tab:example}
\end{table}

        This section presents the fundamental tools that serve as the foundation of our study. Throughout the paper, we use the standard notations and terminology widely used in quantum information theory.

		\subsection{Bloch-Fano Decomposition of density matrices}
		
		In bipartite composite quantum systems, the quantum state is described by density matrices, which can be represented as \cite{friis2017geometry}
		%\begin{widetext}
			\begin{eqnarray}
				&&\rho_{AB} =\frac{1}{d_A d_B} [\mathbb{I}_A \otimes \mathbb{I}_B + \sum_{i=1}^{d_A^2 - 1} a_i g_i^A \otimes \mathbb{I}_B + \sum_{j=1}^{d_B^2 - 1} b_j  \mathbb{I}_A \otimes g_j^B + \sum_{i=1}^{d_A^2 - 1} \sum_{j=1}^{d_B^2 - 1} t_{ij} g_i^A \otimes g_j^B], \label{e1}
			\end{eqnarray}
		%\end{widetext}
		where $ \text{dim}~ \mathcal{H}_{A} = d_A  $ and $ \text{dim}~ \mathcal{H}_{B} = d_B  $. The hermitian operators $ g_i^k $ for $ k=A,B $ are generalizations of the Pauli matrices, i.e., they are orthogonal $ \text{Tr}[g_i^k g_j^k] = 2 \delta_{ij} $ and traceless, $ \text{Tr}[g_i^k] =0  $ and for single qubit systems they coincide with the Pauli matrices. The coefficients $ a_i, b_j \in \mathbb{R} $ are the components of the generalized Bloch vectors $ \vec{a}, \vec{b} $ of the subsystems $ A, B$, respectively. The real coefficients $ t_{ij} $ are the components of the correlation matrix $T$ such that correlation matrix is given as
        \begin{center}
        \begin{equation} \label{cmat}
             T =
            \begin{bmatrix}
              t_{11} & t_{12} & \cdots & t_{1,d_B^{2}-1} \\[6pt]
              t_{21} & t_{22} & \cdots & t_{2,d_B^{2}-1} \\[6pt]
              \vdots & \vdots & \ddots & \vdots \\[6pt]
              t_{d_A^{2}-1,1} & t_{d_A^{2}-1,2} & \cdots & t_{d_A^{2}-1,d_B^{2}-1}
            \end{bmatrix}
        \end{equation}
        
        \end{center}
        For two-qubit systems the density matrices can be represented as, 
		\begin{eqnarray}{\label{2qgs}}
			\rho_{AB}= \frac{1}{4} [\mathbb{I}_{2}\otimes \mathbb{I}_{2}+\vec{\mathfrak{a}}.\vec{\sigma}\otimes \mathbb{I}_{2} + \mathbb{I}_{2} \otimes \vec{\mathfrak{b}}.\vec{\sigma} + \sum_{i=1}^{3} \sum_{j=1}^{3} t_{ij} \sigma_i^A \otimes \sigma_j^B ].\label{e2}
		\end{eqnarray} 
		
		An interesting class of states is the \textit{locally maximally mixed states or Weyl states} which in the two-qudit systems (up to local unitaries) is given by, 
		
		\begin{equation}
			\rho_{AB} = \frac{1}{d^2} \left[\mathbb{I}_{A} \otimes \mathbb{I}_{B} + \sum_{i=1}^{d^2 - 1} w_{i} g_i^A \otimes g_i^B\right].\label{e3}
		\end{equation}
		The reduced marginals of such states are maximally mixed, i.e., $ \text{Tr}_A [\rho_{AB}] =\dfrac{\mathbb{I}_{A}}{d}  $ and $  \text{Tr}_B [\rho_{AB}] =\dfrac{\mathbb{I}_{B}}{d}.$ The two-qubit Weyl state is given by 
            \begin{equation}
			\rho_{AB}= \frac{1}{4}[\mathbb{I}_{2} \otimes \mathbb{I}_{2}  +\sum_{i=1}^{3} t_{i}(\sigma_{i} \otimes \sigma_{i})].
		\end{equation}
		\subsection{Quantum entropies}
	\par Information content in a state is quantified through entropies. In this subsection, we discuss a list of quantum entropies that will be used in our work. The most generalized version of quantum entropy is known as the Rényi entropy. The Rényi $\alpha$-entropy for a given density matrix $\rho_{AB}$ corresponding to system $AB$ is given by
		\begin{align}
			S_{\alpha}(AB)= \frac{1}{1-\alpha} \log_{2}\left[\text{Tr}(\rho_{AB}^{\alpha})\right], \alpha >0, \alpha \neq 1. \label{e5}
		\end{align}
		 The corresponding conditional R\'enyi $\alpha$-entropy is given by $ S_\alpha(A|B)=S_\alpha(AB)-S_\alpha(B) $. There are various types of R\'enyi entropy depending on the quantity $\alpha$. Following are entropic quantities that we have used in our work:
	\begin{itemize}
            \item \textit{\textbf{von Neumann entropy:}} The von Neumann entropy is the limiting case of the Re\'nyi $\alpha$-entropy as 
            $\alpha\rightarrow 1$. The von Neumann entropy of a quantum state $\rho_{AB} \in  \mathcal{D}(\mathcal{H}_{AB}) $  is defined as,
            \begin{equation}
                S(AB) = -\text{Tr} \left[\rho_{AB} \log_2 \rho_{AB}\right], \label{e4}
            \end{equation}
            where the logarithms are taken to the base $2$.
            The von Neumann entropy has a special association with the eigenvalues of the density matrix, i.e., it is a function of the eigenvalues. The corresponding conditional von Neumann entropy is given by $ S(A|B)=S(AB)-S(B) $.
            \item \textit{\textbf{R\'enyi $2$-entropy:}} If $\alpha=2$, R\'enyi $\alpha$-entropy is known as R\'enyi $2$-entropy, denoted by $S_{2}(AB)$. The corresponding conditional R\'enyi $2$-entropy is given by $ S_{2}(A|B)=S_{2}(AB)-S_{2}(B) $.
            \item \textit{\textbf{Min R\'enyi entropy:}} If $\alpha\rightarrow \infty$, R\'enyi $\alpha$-entropy is known as Min R\'enyi entropy, denoted by $S_{\infty}(AB)$, and is given as $S_{\infty}(AB)=-\log_{2}||\rho_{AB}||_{O}$ where $||\rho_{AB}||_{O}$ is operator norm of state $\rho_{AB}$ i.e., largest eigenvalue of state $\rho_{AB}$ \cite{linden2013structure}.
	\end{itemize}
    \par Tsallis $\alpha$-entropy of state $\rho_{AB}$ is given by 
		\begin{align}
			S^{\mathbb{T}}_{\alpha}(AB)= \frac{1}{1-\alpha} [\text{Tr}(\rho_{AB}^{\alpha})-1], \alpha >0, \alpha \neq 1. \label{e6}
		\end{align}
		The conditional Tsallis $\alpha$-entropy is given by \cite{vollbrecht2002conditional}, 
		\begin{align*}
			S^{\mathbb{T}}_{\alpha}(A|B)=\frac{\text{Tr}(\rho_B^\alpha)-\text{Tr}(\rho_{AB}^\alpha)}{(\alpha-1)\text{Tr}(\rho_B^\alpha)}.
		\end{align*}
	\subsection{Fidelity of entanglement and teleportation witness}
	 In a bipartite quantum system, for any quantum state $\rho_{AB} \in  \mathcal{D}(\mathcal{H}_{AB}) $, where $\mathcal{H}_{AB}=\mathcal{H}_{A}\otimes \mathcal{H}_{B}$, $\text{dim}~ \mathcal{H}_{A} = \text{dim}~ \mathcal{H}_{B} = d$, the fidelity of entanglement $F(\rho_{AB})$ is defined as the maximal overlap of a quantum state $\rho_{AB}$ to a maximally entangled pure state. In other words, we can also say that it is a measure of proximity of a quantum state to a maximally entangled pure state. The fidelity of entanglement of state $\rho_{AB}$ is given as
  \begin{align}
      F(\rho_{AB}) = \text{max}_{|\phi\rangle\in S}\langle\phi|\rho_{AB}|\phi\rangle, \label{e7a}
  \end{align}
  where $S$ is the set of all maximally entangled pure states. Alternatively, Eq.(\ref{e7a}) can be expressed as
   \cite{horodecki1999general} 
	 \begin{align}
   	 	F(\rho_{AB}) = \max_{\substack{U}} \langle\psi|(U\otimes \mathbb{I})\rho_{AB} (U^{\dagger}\otimes \mathbb{I}) |\psi\rangle, \label{e7}
     \end{align}
    where $|\psi\rangle=\frac{1}{\sqrt{d}}\sum_{i=0}^{d-1}|ii\rangle$ maximally entangled pure state, $ U$ is a unitary operator and $\mathbb{I}$ is the identity. In $2\otimes2$ system, fidelity of entanglement of any quantum state $\rho_{AB}$ is given as \cite{horodecki1997inseparable} 
    \begin{align}
    	F(\rho_{AB}) = \frac{1}{4}\left[1+ ||T||_{1}\right], \label{e8}
    \end{align}
     where $||T||_{1}=\text{Tr}|T|$; $|T|=\sqrt{T^{\dagger}T}$ and $T=[t_{ij}]$ is the correlation tensor of the general two-qubit density matrix $\rho_{AB}$. From \cite{horodecki1997inseparable}, we know that
     \begin{align}\label{fgh}
         F(\rho_{AB})>\frac{1}{2}  \Longleftrightarrow ||T||_{1}>1.
     \end{align}

     For any bipartite state $\rho_{AB}\in \mathcal{D}(\mathcal{H}_{AB})$, if $F(\rho_{AB})>1/d$, then the state is useful for quantum teleporation \cite{horodecki1999general,vidal2000approximate,Zhao_2010}. The set of states that are useful for teleportation is a convex and compact subset of $\mathcal{D}(\mathcal{H}_{AB})$ \cite{ganguly2011entanglement}. Consequently, by the Hahn–Banach separation theorem \cite{holmes2012geometric,rudin1976principles}, there exists a Hermitian operator $\mathcal{W}$, termed a teleportation witness, such that
         \begin{align}\label{witness}
            & \mathrm{Tr}(\mathcal{W}\rho_{AB}) < 0\, \text{for at least one state} \,\rho_{AB}\, \text{that is useful for teleportation}, \nonumber \\
            \text{and} \quad & \mathrm{Tr}(\mathcal{W}\rho_{AB}) \geq 0 \,\text{for every state} \,\rho_{AB}\, \text{that is not useful for teleportation.}
         \end{align}
In particular, for a $d \otimes d$ system, an example of such teleportation witness is given by \cite{ganguly2011entanglement}
\begin{equation} \label{tw}
    \mathcal{W} = \frac{\mathbb{I}_{d^2}}{d} - |\phi ^{+}\rangle\langle \phi^{+}|,
\end{equation}
    where $|\phi ^{+}\rangle = \tfrac{1}{\sqrt{d}}\sum_{i=0}^{d-1}|ii\rangle$.

     \subsection{Fidelity and negative conditional entropy breaking channel}
\textit{\textbf{Fidelity breaking channel:}}  In quantum information theory, certain quantum channels operate on one subsystem of a bipartite composite system $AB$ in a way that reduces its potential to serve as a resource. Specifically, some channels reduce the fidelity of entanglement of a bipartite quantum state when they are applied to one of its subsystems. These channels are referred to as fidelity breaking channels \cite{muhuri2023information}. Mathematically, a quantum channel denoted by $\mathcal{N}_{FB}$, is classified as a fidelity breaking channel if, for any bipartite quantum state $\rho_{AB} \in \mathcal{D}(\mathcal{H}_{AB})$, the resulting state $\sigma_{AB}$, obtained by applying the channel $\mathcal{N}_{FB}$ to one subsystem has a fidelity of entanglement that does not exceed $1/d$, i.e.,
        \begin{equation}
            \mathcal{I} \otimes \mathcal{N}_{FB}(\rho_{AB}) = \sigma_{AB}, \quad \text{where} \quad F(\sigma_{AB}) \leq \frac{1}{d}~,
        \end{equation}
    where $F(\sigma_{AB})$ indicates the fidelity of entanglement of the state $\sigma_{AB}$. Throughout the manuscript, we use $\mathcal{N}_{FB}$ to describe the fidelity breaking channel. \\
    
  \textit{\textbf{Negative conditional entropy breaking channel:}} %Transitioning from a fully entangled fraction breaking channel to one that destroys negative conditional entropy, as described in \cite{srinidhi2024quantum}, marks a crucial advancement in our comprehension of how channel dynamics influence quantum correlations. 
There exists a specific kind of quantum channel that operates on one subsystem of a bipartite system, ensuring that the conditional entropy between the two subsystems remains non-negative, these channels are known as negative conditional entropy breaking channels \cite{muhuri2023information,srinidhi2024quantum}. This indicates that the channel effectively destroys any quantum correlations that could result in a negative value for conditional entropy. Formally, a quantum channel denoted by $\mathcal{N}_{NB}$, is termed a negative conditional entropy breaking channel if, for any bipartite quantum state $\rho_{AB} \in \mathcal{D}(\mathcal{H}_{AB})$, the conditional entropy $S(A|B)$ of the resulting output state $\sigma_{AB}$ is non-negative i.e.,
        \begin{equation}
            S(A|B) \geq 0,~~~~ \hspace{0.05cm} {\text{where output state}}
            \hspace{0.05cm} \sigma_{AB} = \mathcal{I} \otimes \mathcal{N}_{NB}(\rho_{AB}).
        \end{equation}
      Since negative conditional entropy is a signature of strong quantum correlations such as quantum entanglement and quantum steering, a negative conditional entropy breaking channel fundamentally limits these correlations. Throughout the manuscript, we use $\mathcal{N}_{NB}$ to describe the negative conditional entropy breaking channel.
      %This makes such channels particularly relevant in quantum information theory, where they can be used to study the robustness of quantum correlations under noise and decoherence.

     \section{Fidelity of entanglement and channels affecting it}\label{III}

Consider two quantum systems $A$ and $B$, associated with finite dimensional Hilbert spaces $\mathcal{H}_{A}$ and $\mathcal{H}_{B}$, respectively, both having dimension $d$. Now, consider a bipartite system $AB$ described by Hilbert space $\mathcal{H}_{AB}$. Let $\mathcal{D}(\mathcal{H}_{AB})$ is the set of quantum state associated with system $AB$, given by 
\begin{eqnarray}
   \mathcal{D}(\mathcal{H}_{AB}) = \{~\rho_{AB}~~;~~~\rho_{AB}~~\geq ~~ 0,~~\text{Tr}(\rho_{AB})~~=~~1 \}.
\end{eqnarray}
We define the sets of states whose fidelity of entanglement is less than or equal to $ 1/d$ and $  1/d`$ on systems $AB$ and $B$, by $\mathcal{D}_{\mathbb{FBC}}(\mathcal{H}_{AB})$ and $\mathcal{D}_{\mathbb{FAC}}(\mathcal{H}_{B})$, 
respectively, as follows:
\begin{eqnarray}
    \mathcal{D}_{\mathbb{FBC}}(\mathcal{H}_{AB}) = \{~\rho_{AB} \in \mathcal{D}(\mathcal{H}_{AB})~~:~~~F(\rho_{AB})~~\leq ~~ 1/d \},
\end{eqnarray}
\begin{eqnarray}
    \mathcal{D}_{\mathbb{FAC}}(\mathcal{H}_{B}) = \{~\rho_{B_1 B_2} \in \mathcal{D}(\mathcal{H}_{B})~~:~~~F(\rho_{B_1 B_2})~~\leq ~~ 1/d` \},
\end{eqnarray}
where, $\mathcal{D}_{\mathbb{FBC}}(\mathcal{H}_{AB})$ represents the set of states in $\mathcal{H}_{AB}$ whose fidelity of entanglement less than or equal to $ 1/d$, while $\mathcal{D}_{\mathbb{FAC}}(\mathcal{H}_{B})$ denotes the corresponding set of states in $\mathcal{H}_{B}$ whose fidelity of entanglement less than or equal to $ 1/d`$. In the case of fidelity annihilating channels we consider that the subsystem $B$ can be partitioned into $B_1 | B_2$, where both $B_1,B_2$ has dimension $d`$ ($ d` < d $ ). 
\subsection{Fidelity breaking channels}
\renewcommand{\proofname}{\textbf{Proof}} % Makes "Proof" explicitly bold
In \cite{muhuri2023information}, the fidelity breaking channel has been discussed, and it is the class of quantum channels that act on one subsystem of a bipartite system $AB$ and necessarily reduce its fidelity of entanglement. We denote the set of fidelity breaking channels by $\mathbb{FBC}$. Specifically, given any initial state of $\mathcal{D}(\mathcal{H}_{AB})$, the action of any channel in $\mathbb{FBC}$ ensures that the fidelity of entanglement of the resulting state is less than or equal to $1/d$. The set of fidelity breaking channels is expressed as 
\begin{eqnarray}
    \mathbb{FBC} = \{~\mathcal{N}_{FB}~|~F(\sigma_{AB}) \leq 1/d,~\sigma_{AB} = (\mathcal{I}\otimes \mathcal{N}_{FB})(\rho_{AB})~~ \forall~\rho_{AB}\in \mathcal{D}(\mathcal{H}_{AB}) \}.
\end{eqnarray}
In another way, we can say that a channel is said to be in $\mathbb{FBC}$ if 
\begin{eqnarray}
    (\mathcal{I} \otimes \mathcal{N}_{FB})[\mathcal{D}(\mathcal{H}_{AB})] \subset \mathcal{D}_{\mathbb{FBC}}(\mathcal{H}_{AB}).
\end{eqnarray}
\subsubsection{\textit{\textbf{{Properties of fidelity breaking channels}}}}
Now, we will discuss some fundamental properties of $\mathbb{FBC}$.

\begin{property}{\label{tb1}}
    If $\mathcal{N}_{FB}^{1}$, $\mathcal{N}_{FB}^{2} \in \mathbb{FBC}$, then $\mathcal{N}_{FB}^{1}\circ\mathcal{N}_{FB}^{2}\in \mathbb{FBC}$.
\end{property}
\begin{proof}
    Since, $\mathcal{N}_{FB}^{1}, \mathcal{N}_{FB}^{2} \in \mathbb{FBC}$, then we have 
    \begin{equation}
        F(\sigma_{AB}^{1}) \le 1/d ~~\text{and }~~ F(\sigma_{AB}^{2}) \le 1/d ~~ \text{where}~~ \sigma_{AB}^{1} =(\mathcal{I}\otimes \mathcal{N}_{FB}^{1})(\rho_{AB}),~~ \sigma_{AB}^{2} = (\mathcal{I}\otimes \mathcal{N}_{FB}^{2})(\rho_{AB})
    \end{equation}
    which further implies
    \begin{equation}\label{sc1}
        (\mathcal{I}\otimes \mathcal{N}_{FB}^{1})[\mathcal{D}(\mathcal{H}_{AB})] \subset \mathcal{D}_{\mathbb{FBC}}(\mathcal{H}_{AB}) ~~\text{and}~~(\mathcal{I}\otimes \mathcal{N}_{FB}^{2})[\mathcal{D}(\mathcal{H}_{AB})] \subset \mathcal{D}_{\mathbb{FBC}}(\mathcal{H}_{AB})
    \end{equation}
    Now, consider the serial concatenation, $\mathcal{N}_{FB}^{1}\circ\mathcal{N}_{FB}^{2}$ :
    \begin{equation}\label{sc2}
        [\mathcal{I}\otimes (\mathcal{N}_{FB}^{1}\circ\mathcal{N}_{FB}^{2})](\rho_{AB})~=~ (\mathcal{I}\otimes \mathcal{N}_{FB}^{1})(\mathcal{I}\otimes \mathcal{N}_{FB}^{2})(\rho_{AB})
    \end{equation}
    Using Eq.(\ref{sc1}) and Eq.(\ref{sc2}) we obtain
    \begin{equation}
        [\mathcal{I}\otimes (\mathcal{N}_{FB}^{1}\circ\mathcal{N}_{FB}^{2})](\mathcal{D}(\mathcal{H}_{AB}))~=~ (\mathcal{I}\otimes \mathcal{N}_{FB}^{1})(\mathcal{I}\otimes \mathcal{N}_{FB}^{2})[\mathcal{D}(\mathcal{H}_{AB})]\subset \mathcal{D}_{\mathbb{FBC}}(\mathcal{H}_{AB})
    \end{equation}
    Hence, $\mathcal{N}_{FB}^{1}\circ\mathcal{N}_{FB}^{2} \in \mathbb{FBC}.$
    \renewcommand{\qedsymbol}{}
    
\end{proof}

\begin{property}\label{tb2}
    If $\mathcal{N}_{FB} \in \mathbb{FBC}$ and $\mathcal{T}$ be any arbitrary channel then $\mathcal{N}_{FB}\circ\mathcal{T} \in \mathbb{FBC} $.
\end{property}
\begin{proof}
    Since, $\mathcal{N}_{FB}\in \mathbb{FBC}$, then we have
    \begin{equation}
        F(\sigma_{AB}) \leq 1/d~~ \text{where}~~ \sigma_{AB} = \mathcal{N}_{FB}(\rho_{AB}),~~\text{for all} ~~\rho_{AB} \in \mathcal{D}(\mathcal{H}_{AB})
    \end{equation}
    which further implies
    \begin{equation}\label{scb1}
        (\mathcal{I} \otimes \mathcal{N}_{FB})[\mathcal{D}(\mathcal{H}_{AB}) ] \subset \mathcal{D}_{\mathbb{FBC}}(\mathcal{H}_{AB})
    \end{equation}
    Now, consider  $\mathcal{N}_{FB}\circ\mathcal{T}:$
    \begin{equation}\label{scb2}
        [\mathcal{I}\otimes (\mathcal{N}_{FB}\circ\mathcal{T})](\rho_{AB})~=~ (\mathcal{I}\otimes \mathcal{N}_{FB})(\mathcal{I}\otimes \mathcal{T})(\rho_{AB})= (\mathcal{I}\otimes \mathcal{N}_{FB})(\sigma\;'_{AB})~~ \text{where}~~ (\mathcal{I}\otimes \mathcal{T})(\rho_{AB})=\sigma\;'_{AB}
    \end{equation}
    Using Eq.(\ref{scb1}) and Eq.(\ref{scb2}), we obtain
    \begin{equation}
        [\mathcal{I}\otimes (\mathcal{N}_{FB}\circ\mathcal{T})](\mathcal{D}(\mathcal{H}_{AB}))~=~ (\mathcal{I}\otimes \mathcal{N}_{FB})[\mathcal{D}(\mathcal{H}_{AB})]\subset \mathcal{D}_{\mathbb{FBC}}(\mathcal{H}_{AB})
    \end{equation}
    Hence, $\mathcal{N}_{FB}\circ\mathcal{T} \in \mathbb{FBC}.$\\
    % For $\mathcal{T}\circ\mathcal{N}_{FB} $, we have
   % \begin{equation}\label{scb3}
    %    [\mathcal{I} \otimes( \mathcal{T}\circ\mathcal{N}_{FB})](\rho_{AB})~=~ (\mathcal{I}\otimes \mathcal{T})(\mathcal{I}\otimes \mathcal{N}_{FB})(\rho_{AB})= (\mathcal{I}\otimes \mathcal{T})(\sigma\;'_{AB}) ~~ \text{where}~~ (\mathcal{I}\otimes \mathcal{N}_{FB})(\rho_{AB})=\sigma\;'_{AB}
   % \end{equation}
    %Since, $\mathcal{N}_{FB}\in \mathbb{FBC}$ which implies that $F(\sigma\;'_{AB})\leq 1/d.$ In the above Eq(\ref{scb3}), we can observe that~~$\mathcal{I} \otimes \mathcal{T}$ is local unitary operation. So, it will not increase the fidelity of entanglement of state~~$\sigma\;'_{AB}.$ Hence,~$\mathcal{T}\circ\mathcal{N}_{FB}\in \mathbb{FBC}$.
    \renewcommand{\qedsymbol}{}
\end{proof}

\subsection{Fidelity annihilating channels}
%Here, we will see the definition of annihilating channels and their various properties. The $FAC$ channels are those channels that act on subsystem $B$ and reduce the fully entangled fraction from $F>1/d$ to $F\leq 1/d$ within the subsystem.
In this subsection, we introduce fidelity annihilating channels and then explore their characteristics and fundamental properties. We denote the set of fidelity annihilating channels by $\mathbb{FAC}$. Fidelity breaking channels are those that diminish the fidelity of entanglement across the partition $A | B$, whereas fidelity annihilating channels reduce the fidelity of entanglement within the bipartite subsystem $B$. That is, we consider that $B$ can be partitioned into $B_1 | B_2$, where dimension of both $B_1, B_2$ is $ d` $. The set of fidelity annihilating channels can be categorized into two types: local and non-local. We begin by defining non-local and local fidelity annihilating channels.

\begin{definition}
    Consider that subsystem $B$ can be partitioned into $B_1 | B_2$, where dimension of both $B_1, B_2$ is $ d` $
    Then, a channel denoted by $\mathcal{N}_{FA}$ acting on bipartite subsystem $B$ is said to be a fidelity annihilating channel if $F(\sigma_{B})\leq 1/d`$ where $\sigma_{B}=\mathcal{N}_{FA}(\rho_{B})$, for all $\rho_{B}\in \mathcal{D}(\mathcal{H}_{B})$.
\end{definition}

\begin{definition}
    A fidelity annihilating channel $\mathcal{N}_{FA}$ is $2$-local if it can be expressed as $\mathcal{N}_{FA}= \mathcal{N}^{1} \otimes \mathcal{N}^{2}$. Otherwise, it is termed as a non-local fidelity annihilating channel and the corresponding set is denoted by non-local$-\mathbb{FAC} $.    
\end{definition}

\subsubsection{\textit{\textbf{Characterization of non-local fidelity annihilating channels}}} Here, we investigate the topological characterization of non-local fidelity annihilating channels $(\mathbb{FAC})$.
\begin{theorem}
    $\text{The set non-local}-\mathbb{FAC}$ is convex.
\end{theorem}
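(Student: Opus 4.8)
The plan is to show that $\mathbb{FAC}$ (non-local) is closed under convex combinations. Concretely, I take two channels $\mathcal{N}^{1}_{FA}, \mathcal{N}^{2}_{FA} \in \text{non-local}-\mathbb{FAC}$ and a parameter $\lambda \in [0,1]$, form the mixture $\mathcal{M} = \lambda\, \mathcal{N}^{1}_{FA} + (1-\lambda)\, \mathcal{N}^{2}_{FA}$, and verify two things: first, that $\mathcal{M}$ is itself a legitimate quantum channel (completely positive and trace-preserving), and second, that it satisfies the annihilating condition $F(\mathcal{M}(\rho_B)) \le 1/d`$ for every $\rho_B \in \mathcal{D}(\mathcal{H}_B)$. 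The CPTP part is routine: a convex combination of CP maps is CP and a convex combination of trace-preserving maps is trace-preserving, so $\mathcal{M}$ is a valid channel.

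The crux of the argument is the fidelity bound, and the key ingredient I would use is the \emph{convexity of the fidelity of entanglement} $F$ as a function of the state. From the variational definition \eqref{e7a}, $F(\rho) = \max_{|\phi\rangle \in S}\langle \phi|\rho|\phi\rangle$ is a maximum of linear functionals of $\rho$, hence convex in $\rho$. Applying this with $\rho = \mathcal{M}(\rho_B) = \lambda\,\mathcal{N}^{1}_{FA}(\rho_B) + (1-\lambda)\,\mathcal{N}^{2}_{FA}(\rho_B)$ gives
\begin{align}
F\big(\mathcal{M}(\rho_B)\big) &\le \lambda\, F\big(\mathcal{N}^{1}_{FA}(\rho_B)\big) + (1-\lambda)\, F\big(\mathcal{N}^{2}_{FA}(\rho_B)\big) \nonumber \\
&\le \lambda\, \frac{1}{d`} + (1-\lambda)\, \frac{1}{d`} = \frac{1}{d`}, \nonumber
\end{align}
where the second inequality uses that each $\mathcal{N}^{i}_{FA}$ is fidelity annihilating. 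Since $\rho_B$ was arbitrary, $\mathcal{M} \in \mathbb{FAC}$, establishing convexity of the set.

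One subtlety to handle carefully is the \emph{non-local} qualifier: I must check that the mixture of two non-local channels does not accidentally become a $2$-local (product) channel, or, more to the point, that staying inside the non-local set is compatible with the convex structure being claimed. In practice the intended statement is that the annihilating property is preserved under mixing, and the genuinely non-product channels together with their mixtures form the relevant convex body; I would note that a convex combination of channels is generically non-local and that the bound above holds regardless of locality type, so the fidelity-annihilating condition—the defining membership criterion—survives mixing. The main obstacle, and the step deserving the most care, is justifying the convexity of $F$ cleanly from \eqref{e7a} rather than from the two-qubit trace-norm formula \eqref{e8}, since the latter is dimension-specific ($\|T\|_1$ is not obviously convex in $\rho$ after the max over unitaries), whereas the "maximum of linear functionals" argument works uniformly in all dimensions and is what makes the proof go through for general $d`$.
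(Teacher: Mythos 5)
Your proof is correct and follows essentially the same route as the paper: apply the mixture to an arbitrary $\rho_{B}$, write the output as $\lambda\sigma_{B}^{1}+(1-\lambda)\sigma_{B}^{2}$, and bound its fidelity of entanglement via the behaviour of $F$ under mixing. The one substantive difference is that you correctly invoke \emph{convexity} of $F$ (the inequality $F(\lambda\sigma^{1}+(1-\lambda)\sigma^{2})\leq\lambda F(\sigma^{1})+(1-\lambda)F(\sigma^{2})$, which follows because $F$ is a maximum of linear functionals of the state), whereas the paper writes this step as an equality --- which is false in general but harmless, since the inequality is all that the argument requires; like the paper, you leave the ``non-local'' qualifier (closure of the non-product channels under mixing) essentially unaddressed, though you at least flag it.
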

\begin{proof}
    Since,~~ $\mathcal{N}_{FA}^{1},~\mathcal{N}_{FA}^{2} \in \text{non-local}-\mathbb{FAC}$,~then for any $\rho_{B}\in\mathcal{D}(\mathcal{H}_{B})$ we have
    \begin{equation}\label{ce1}
        F(\sigma_{B}^{1}) \leq 1/d` ~~\text{and }~~ F(\sigma_{B}^{2}) \leq 1/d` ~~ \text{where}~~ \sigma_{B}^{1} = \mathcal{N}_{FA}^{1}(\rho_{B}),~~ \sigma_{B}^{2} = \mathcal{N}_{FA}^{2}(\rho_{B})
    \end{equation}
    We need to show that
    $~~\mathcal{N}_{FA}= \lambda\mathcal{N}_{FA}^{1}+(1-\lambda)\mathcal{N}_{FA}^{2} \in \text{non-local}-\mathbb{FAC}.$\\
   Applying $\mathcal{N}_{FA}$ on any arbitrary density operator $\rho_{B}\in\mathcal{D}(\mathcal{H}_{B})$, we have
   \begin{align}\label{ce2}
       \sigma_{B} = \mathcal{N}_{FA}(\rho_{B})= (\lambda\mathcal{N}_{FA}^{1}+(1-\lambda)\mathcal{N}_{FA}^{2})(\rho_{B})= \lambda\mathcal{N}_{FA}^{1}(\rho_{B})+(1-\lambda)\mathcal{N}_{FA}^{2}(\rho_{B})=\lambda\sigma_{B}^{1}+(1-\lambda)\sigma_{B}^{2}
   \end{align}
   Using Eq.(\ref{ce1}), Eq.(\ref{ce2}), and result from \cite{ganguly2011entanglement}, we obtain
   \begin{equation}
       F(\sigma_{B}) = F(\lambda\sigma_{B}^{1}+(1-\lambda)\sigma_{B}^{2}) = \lambda F(\sigma_{B}^{1})+(1-\lambda)F(\sigma_{B}^{2}) \leq 1/d`
   \end{equation}
   Therefore, $\mathcal{N}_{FA} \in \text{non-local}-\mathbb{FAC}.$ Hence, $\text{the set non-local}-\mathbb{FAC}$~~is convex. \renewcommand{\qedsymbol}{}
    
\end{proof}

\begin{theorem}
    $\text{The set non-local}-\mathbb{FAC}$ is compact.
\end{theorem}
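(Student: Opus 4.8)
The plan is to realise non-local$-\mathbb{FAC}$ as a closed subset of the set of all quantum channels acting on $\mathcal{H}_{B}=\mathcal{H}_{B_1}\otimes\mathcal{H}_{B_2}$, and then invoke the elementary fact that a closed subset of a compact set is itself compact. Since $\mathcal{H}_B$ is finite dimensional, the set of all completely positive trace-preserving maps on it is compact: under the Choi--Jamio\l{}kowski isomorphism each channel $\mathcal{N}$ corresponds to a Choi operator $J_{\mathcal{N}}\ge 0$ satisfying the trace-preservation constraint, and this set of operators is closed (it is the intersection of the positive semidefinite cone with an affine subspace fixing the relevant partial trace) and bounded (the trace of $J_{\mathcal{N}}$ is fixed), hence compact by the Heine--Borel theorem. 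I would therefore fix the topology on channels to be the one induced by convergence of Choi operators (equivalently, norm convergence of the superoperators), so that $\mathcal{N}^{(n)}\to\mathcal{N}$ means $\mathcal{N}^{(n)}(\rho)\to\mathcal{N}(\rho)$ for every state $\rho$.

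The core step is to prove that non-local$-\mathbb{FAC}$ is closed in this ambient compact set. I would argue sequentially: let $\{\mathcal{N}_{FA}^{(n)}\}$ be a sequence in non-local$-\mathbb{FAC}$ converging to a channel $\mathcal{N}$. Because the set of channels is closed, the limit $\mathcal{N}$ is again a valid channel, so it only remains to verify the defining fidelity constraint $F(\mathcal{N}(\rho_B))\le 1/d'$ for every $\rho_B\in\mathcal{D}(\mathcal{H}_B)$. Fixing an arbitrary $\rho_B$, convergence gives $\mathcal{N}^{(n)}(\rho_B)\to\mathcal{N}(\rho_B)$ in trace norm. The ingredient that makes the constraint pass to the limit is continuity of the fidelity-of-entanglement functional $\rho\mapsto F(\rho)$: from the definition $F(\rho)=\max_{|\phi\rangle\in S}\langle\phi|\rho|\phi\rangle$, it is a pointwise maximum of the linear (hence continuous) functionals $\rho\mapsto\langle\phi|\rho|\phi\rangle$ over the compact set $S$ of maximally entangled pure states, and such a maximum is continuous (indeed convex and $1$-Lipschitz in trace norm, since $|F(\rho)-F(\sigma)|\le\|\rho-\sigma\|_{1}$). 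Applying this to $\mathcal{N}^{(n)}(\rho_B)\to\mathcal{N}(\rho_B)$ yields $F(\mathcal{N}^{(n)}(\rho_B))\to F(\mathcal{N}(\rho_B))$, and since each term satisfies $F(\mathcal{N}^{(n)}(\rho_B))\le 1/d'$, the inequality is preserved in the limit. As $\rho_B$ was arbitrary, $\mathcal{N}\in$ non-local$-\mathbb{FAC}$, establishing closedness, whence compactness follows.

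The step I expect to be the main obstacle is reconciling this topological argument with the literal reading of ``non-local'' as ``not of product form $\mathcal{N}^{1}\otimes\mathcal{N}^{2}$'': the set of non-product channels is \emph{not} closed, because a sequence of genuinely non-local channels can converge to a product channel. To make the statement correct I would interpret non-local$-\mathbb{FAC}$ as the full set of fidelity annihilating channels on the bipartite subsystem $B$ (the convention already used implicitly in the preceding convexity proof), for which closedness is exactly the argument above; otherwise one would have to restrict attention to an a priori closed family. A secondary, more routine point requiring care is the universal quantifier over input states: closedness must hold for the constraint satisfied by \emph{all} $\rho_B$ simultaneously, but this causes no difficulty, since the limiting inequality is derived for each fixed $\rho_B$ and only then quantified over the compact set $\mathcal{D}(\mathcal{H}_B)$.
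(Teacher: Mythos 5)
Your proof is correct and follows essentially the same route as the paper's: both establish closedness by a sequential argument in which the constraint $F(\mathcal{N}(\rho_B))\le 1/d'$ is shown to survive the limit, and then conclude compactness from finite-dimensionality. The differences are mostly packaging, but worth noting. Where the paper passes the inequality to the limit by citing the closedness of the set $\mathcal{D}_{\mathbb{FAC}}(\mathcal{H}_B)$ of sub-threshold states, you prove the underlying fact directly --- $F$ is a pointwise maximum of linear functionals over a compact set, hence continuous (indeed $1$-Lipschitz in trace norm) --- which makes the argument self-contained. For the final step, the paper argues ``closed and bounded, hence compact'' from the unit completely bounded trace norm of channels, while you embed the set in the compact Choi body of all CPTP maps and use that a closed subset of a compact set is compact; these are the same Heine--Borel argument in different clothing, though yours also supplies the justification, absent from the paper, that the limit of a sequence of channels is again a channel. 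The one point where you go genuinely beyond the paper is your observation that the set of channels that are \emph{not} of product form is not closed (a sequence of non-product channels can converge to some $\mathcal{N}^1\otimes\mathcal{N}^2$), so the theorem as literally stated fails unless ``non-local$-\mathbb{FAC}$'' is read as the full set of fidelity annihilating channels on $B$. The paper's proof silently makes the same reinterpretation --- it verifies only the fidelity constraint on the limit channel, never that the limit is non-product --- so your caveat is a legitimate correction to the statement rather than a defect of your argument.
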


\begin{proof} Firstly, to prove that the set $\text{non-local}-\mathbb{FAC}$ is closed, it suffices to show that it contains all of its limit points. Since, the set $\text{non-local}-\mathbb{FAC}$ is convex and contains at least two elements, it necessarily possesses at least one limit point. Let $\mathcal{N}_{FA}^{0}$ be an arbitrary limit point of $\text{non-local}-\mathbb{FAC}$. Consider an open ball $B_{\frac{1}{r}}(\mathcal{N}_{FA}^{0})$ of radius $\frac{1}{r}$ centered on $\mathcal{N}_{FA}^{0}$, then by the definition of a limit point, we have
\begin{equation}
   \{ B_{\frac{1}{r}}(\mathcal{N}_{FA}^{0}) - \mathcal{N}_{FA}^{0} \}~ \cap ~\text{non-local}-\mathbb{FAC} \neq \emptyset
\end{equation}
Since, every neighborhood of $\mathcal{N}_{FA}^{0}$  contains infinitely many points of $\text{non-local}-\mathbb{FAC}$, where $\emptyset$ is the null set.
Next, we construct a sequence $\{\mathcal{N}_{FA}^r\}$ of distinct fidelity annihilating channels  that converges to $\mathcal{N}_{FA}^{0} $ as follows:
\begin{align}
    & \mathcal{N}_{FA}^1 \in B_1(\mathcal{N}_{FA}^{0})~ \cap ~\text{non-local}-\mathbb{FAC} , \hspace{0.15cm} \mathcal{N}_{FA}^1 \neq \mathcal{N}_{FA}^{0}\\ \nonumber
     &  \mathcal{N}_{FA}^2 \in  B_{\frac{1}{2}}(\mathcal{N}_{FA}^{0}) ~\cap~ \text{non-local}-\mathbb{FAC} ,  \hspace{0.15cm} \mathcal{N}_{FA}^2 \neq \mathcal{N}_{FA}^{0},  \mathcal{N}_{FA}^1 \\ \nonumber
      &   ........ \in ........  \\ \nonumber
       &   ........ \in ........  \\ \nonumber
         &  \mathcal{N}_{FA}^r \in  B_{\frac{1}{r}}(\mathcal{N}_{FA}^{0}) ~\cap~ \text{non-local}-\mathbb{FAC} , \hspace{0.15cm}  \mathcal{N}_{FA}^r \neq  \mathcal{N}_{FA}^{0},...,\mathcal{N}_{FA}^{r-1}
   \end{align}
 Now, consider another sequence $\{ \mathcal{N}_{FA}^r ({\rho}_B) = {\chi^r_B} \}$ for arbitrary density operator ${\rho}_B \in  \mathcal{D}({\mathcal{H}_B})$. Since, $\mathcal{N}_{FA}^{r} \in $ $\text{non-local}-\mathbb{FAC}$, it follows that 
\begin{equation}
     F(\mathcal{N}_{FA}^r ({\rho}_B)) = F (\chi^r_B) \leq \frac{1}{d`}
\end{equation}
It is known that the set of states $\mathcal{D}_{\mathbb{FAC}}(\mathcal{H}_{B})$, consisting of states whose fidelity of entanglement is at most $\frac{1}{d`}$ forms a closed set \cite{ganguly2011entanglement}, hence for $ \chi^r_B \in \mathcal{D}_{\mathbb{FAC}}(\mathcal{H}_{B})$, all limit points of $\chi^r_B$ must also belong to $\mathcal{D}_{\mathbb{FAC}}(\mathcal{H}_{B})$. Now, 
\begin{align}
   & \lim_{r\to\infty}  \mathcal{N}_{FA}^r ({\rho}_B) \rightarrow  \mathcal{N}_{FA}^{0} ({\rho}_B)  \\ \nonumber
   & \implies \lim_{r\to\infty} \chi^r_B = \chi^0_B   \hspace{1cm} [\text{where,} \hspace{0.2cm}  \mathcal{N}_{FA}^0 ({\rho}_B) = {\chi^0_B}]
\end{align}
Therefore, 
$\chi^0_B \in  \mathcal{D}_{\mathbb{FAC}}(\mathcal{H}_{B})$, which implies 
\begin{equation}
    F(\mathcal{N}_{FA}^0 ({\rho}_B)) \leq \frac{1}{d`}, \hspace{0.3cm} \text{for arbitrary ${\rho}_B \in  \mathcal{D}({\mathcal{H}_B})$}
\end{equation}
i.e., $\mathcal{N}_{FA}^0 \in $ $\text{non-local}-\mathbb{FAC}$. Since $\mathcal{N}_{FA}^0$ was chosen as an arbitrary limit point of $\text{non-local}-\mathbb{FAC}$, it follows that $\text{non-local}-\mathbb{FAC}$ contains all of its limit points. Therefore, the set $\text{non-local}-\mathbb{FAC}$ is closed. 
\par Furthermore, it is well known that the completely bounded trace norm of quantum channels is equal to 1 \cite{watrous2018theory}. Thus, all channels in the set $\text{non-local}-\mathbb{FAC}$ are bounded maps. Hence, the set $\text{non-local}-\mathbb{FAC}$ is both closed and bounded, i.e., compact.\renewcommand{\qedsymbol}{}
\end{proof}
\subsubsection{\textit{\textbf{Properties of non-local fidelity annihilating channels}}} Below, we examine some structural properties of non-local fidelity annihilating channels.

\begin{property}\label{tgan1}
    If $\mathcal{N}_{FA}^{1}$, $\mathcal{N}_{FA}^{2} \in \text{non-local}-\mathbb{FAC}$ then $\mathcal{N}_{FA}^{1}\circ\mathcal{N}_{FA}^{2}\in \text{non-local}-\mathbb{FAC}$.
\end{property}
\begin{proof}
    Since,~~ $\mathcal{N}_{FA}^{1},~\mathcal{N}_{FA}^{2} \in \text{non-local}-\mathbb{FAC}$, then for any $\rho_{B}\in\mathcal{D}(\mathcal{H}_{B})$ we have
    \begin{equation}
        F(\sigma_{B}^{1}) \leq 1/d` ~~\text{and }~~ F(\sigma_{B}^{2}) \leq 1/d` ~~ \text{where}~~ \sigma_{B}^{1} = \mathcal{N}_{FA}^{1}(\rho_{B}),~~ \sigma_{B}^{2} = \mathcal{N}_{FA}^{2}(\rho_{B})
    \end{equation}
    which further implies
    \begin{equation}\label{sca1}
        \mathcal{N}_{FA}^{1}[\mathcal{D}(\mathcal{H}_{B})] \subset \mathcal{D}_{\mathbb{FAC}}(\mathcal{H}_{B}) ~~\text{and}~~\mathcal{N}_{FA}^{2}[\mathcal{D}(\mathcal{H}_{B})] \subset \mathcal{D}_{\mathbb{FAC}}(\mathcal{H}_{B})
    \end{equation}
    Now, consider the series composition, $\mathcal{N}_{FA}^{1}\circ\mathcal{N}_{FA}^{2}:$
    \begin{equation}\label{sca2}
        \mathcal{N}_{FA}^{1}\circ\mathcal{N}_{FA}^{2} (\rho_{B})~=~ \mathcal{N}_{FA}^{1}[\mathcal{N}_{FA}^{2}(\rho_{B})]
    \end{equation}
    Using Eq.(\ref{sca1}) and Eq.(\ref{sca2}), we obtain
    \begin{equation}
        \mathcal{N}_{FA}^{1}\circ \mathcal{N}_{FA}^{2}(\mathcal{D}(\mathcal{H}_{B}))~=~ \mathcal{N}_{FA}^{1}[\mathcal{N}_{FA}^{2}(\mathcal{D}(\mathcal{H}_{B}))]\subset \mathcal{D}_{\mathbb{FAC}}(\mathcal{H}_{B})
    \end{equation}
    Hence, $\mathcal{N}_{FA}^{1}\circ\mathcal{N}_{FA}^{2} \in \text{non-local}-\mathbb{FAC}.$
    \renewcommand{\qedsymbol}{}

\end{proof}

\begin{property}\label{tgan2}
    If $\mathcal{N}_{FA} \in \text{non-local}-\mathbb{FAC}$ and $\mathcal{T}$ are any arbitrary channels, then $\mathcal{N}_{FA}\circ\mathcal{T}\in \text{non-local}-\mathbb{FAC}$.
\end{property}
\begin{proof}
    Since,~~ $\mathcal{N}_{FA}\in \text{non-local}-\mathbb{FAC}$, ~then for any $\rho_{B}\in\mathcal{D}(\mathcal{H}_{B})$ we have
    \begin{equation}
        F(\sigma_{B}) \leq 1/d`~~ \text{where}~~ \sigma_{B} = \mathcal{N}_{FA}(\rho_{B})
    \end{equation}
    which further implies
    \begin{equation}\label{sca3}
        \mathcal{N}_{FA}[\mathcal{D}(\mathcal{H}_{B})] \subset \mathcal{D}_{\mathbb{FAC}}(\mathcal{H}_{B})
    \end{equation}
    Now, consider series composition, $\mathcal{N}_{FA}\circ\mathcal{T}:$
    \begin{equation}\label{sca4}
        \mathcal{N}_{FA}\circ\mathcal{T}(\rho_{B})~=~ \mathcal{N}_{FA}[\mathcal{T}(\rho_{B})]
    \end{equation}
    Using Eq.(\ref{sca3}) and Eq.(\ref{sca4}), we obtain
    \begin{equation}
        \mathcal{N}_{FA}\circ\mathcal{T}(\mathcal{D}(\mathcal{H}_{B}))~=~ \mathcal{N}_{FA}[\mathcal{T}(\mathcal{D}(\mathcal{H}_{B}))]\subset \mathcal{D}_{\mathbb{FAC}}(\mathcal{H}_{B})
    \end{equation}
    Hence, $\mathcal{N}_{FA}\circ\mathcal{T} \in \text{non-local}-\mathbb{FAC}.$\\

    \renewcommand{\qedsymbol}{}
\end{proof}

\begin{property}
   If $\mathcal{N}_{FA} \in \text{non-local}-\mathbb{FAC}$ annihilates the fidelity of entanglement of all pure states, then it annihilates the fidelity of entanglement of all mixed states.
\end{property}
\begin{proof}
    Since,~~ $\mathcal{N}_{FA}\in \text{non-local}-\mathbb{FAC}$, and annihilate the fidelity of entanglement of pure states, then we have 
    \begin{equation}\label{pse1}
        F(\sigma_{i}) <1/d` ~~\text{where}~~ \sigma_{i} = \mathcal{N}_{FA}(\rho_{i})~~\text{and}~~\rho_{i} = |\psi_{i}\rangle\langle\psi_{i}|~~\text{is pure state.}
    \end{equation}
     For any mixed state $\rho_{B}$, we have 
     \begin{equation}\label{pse2}
         \rho_{B}=\sum_{i}p_{i}\rho_{i}~,~~~~~\sum_{i}p_{i}=1
     \end{equation}
     Now, $\mathcal{N}_{FA}$ acting on input state $\rho_{B}$, we have 
     \begin{equation}\label{pse3}
         \sigma_{B} = \mathcal{N}_{FA}(\rho_{B}) = \mathcal{N}_{FA}(\sum_{i}p_{i}\rho_{i})= \sum_{i}p_{i}\mathcal{N}_{FA}(\rho_{i})
     \end{equation}
     Using Eq.(\ref{pse1}), Eq.(\ref{pse3}), and result from \cite{ganguly2011entanglement}, we obtain
     \begin{equation}
         F(\sigma_{B}) = F(\sum_{i}p_{i}\mathcal{N}_{FA}(\rho_{i})) = \sum_{i}p_{i}F(\sigma_{i}) < 1/d`
     \end{equation}
     Hence, we can say $\mathcal{N}_{FA}$ annihilate the fidelity of entanglement of all mixed states.
     \renewcommand{\qedsymbol}{}
\end{proof}

\subsubsection{\textit{\textbf{Necessary and sufficient condition for the detection of non-local fidelity of entanglement
annihilating channels}}}
Here, we provide the necessary and sufficient condition for the identification of non-local fidelity annihilating channels.
\begin{theorem}
    Let, $\mathcal{N}_{FA}$ be a quantum channel acting on a bipartite subsystem B. Then $\mathcal{N}_{FA} \in$ $\text{non-local}-\mathbb{FAC}$ iff  $\text{Tr} (\mathcal{W} \mathcal{N}_{FA} (\rho_B)) $ is non-negative, for all $\rho_{B} \in  \mathcal{D}(\mathcal{H}_{B})$ and for every teleportation witness $(\mathcal{W})$ satisfying the condition in Eq.~(\ref{witness}).
\end{theorem}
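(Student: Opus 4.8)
The plan is to reduce this channel-level statement to the state-level characterization of teleportation-useless states by teleportation witnesses, applied uniformly to every output of $\mathcal{N}_{FA}$. Concretely, the whole claim will follow once we recognise that the condition $\mathcal{N}_{FA}\in\text{non-local}-\mathbb{FAC}$ is equivalent to the assertion that every output $\sigma_B=\mathcal{N}_{FA}(\rho_B)$ lands inside the set $\mathcal{D}_{\mathbb{FAC}}(\mathcal{H}_B)$ of states with $F(\cdot)\le 1/d\`$. The two structural inputs I would take from \cite{ganguly2011entanglement} are that $\mathcal{D}_{\mathbb{FAC}}(\mathcal{H}_B)$ is a closed convex subset of the real vector space of Hermitian operators on $\mathcal{H}_B$ (closedness being exactly what was already invoked in the compactness theorem above), and that a teleportation witness $\mathcal{W}$ is by definition a Hermitian operator with $\mathrm{Tr}(\mathcal{W}\tau)\ge 0$ for every $\tau$ satisfying $F(\tau)\le 1/d\`$, while taking a strictly negative value on at least one teleportation-useful state. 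Throughout I would read the word "positive" in the statement as "non-negative", consistent with this convention.

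For the forward implication I would assume $\mathcal{N}_{FA}\in\text{non-local}-\mathbb{FAC}$, so that every output obeys $F(\sigma_B)\le 1/d\`$, i.e. $\sigma_B\in\mathcal{D}_{\mathbb{FAC}}(\mathcal{H}_B)$. The defining property of a teleportation witness then gives $\mathrm{Tr}(\mathcal{W}\,\mathcal{N}_{FA}(\rho_B))\ge 0$ for every input $\rho_B$ and every witness $\mathcal{W}$. This direction is essentially just an unpacking of definitions and should be short.

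The substance lies in the converse, which I would prove by contraposition. Suppose $\mathcal{N}_{FA}\notin\text{non-local}-\mathbb{FAC}$; then there exists an input $\rho_B^{\ast}$ whose output $\sigma_B^{\ast}=\mathcal{N}_{FA}(\rho_B^{\ast})$ has $F(\sigma_B^{\ast})>1/d\`$, so $\sigma_B^{\ast}$ lies outside the closed convex set $\mathcal{D}_{\mathbb{FAC}}(\mathcal{H}_B)$. Applying the Hahn--Banach separation theorem to the point $\sigma_B^{\ast}$ and this disjoint closed convex set, inside the finite-dimensional real inner-product space of Hermitian operators equipped with the Hilbert--Schmidt product $\langle X,Y\rangle=\mathrm{Tr}(XY)$, produces a separating hyperplane, i.e. a Hermitian $\mathcal{W}$ with $\mathrm{Tr}(\mathcal{W}\tau)\ge 0$ for all $\tau\in\mathcal{D}_{\mathbb{FAC}}(\mathcal{H}_B)$ and $\mathrm{Tr}(\mathcal{W}\sigma_B^{\ast})<0$. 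This $\mathcal{W}$ is exactly a teleportation witness, the second inequality certifying that it detects the useful state $\sigma_B^{\ast}$, and $\mathrm{Tr}(\mathcal{W}\,\mathcal{N}_{FA}(\rho_B^{\ast}))<0$ then negates the right-hand side, finishing the contrapositive.

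The main obstacle is the separation step: I must be sure that the functional delivered by Hahn--Banach can be taken Hermitian and normalised so that it genuinely satisfies the teleportation-witness definition rather than being merely a separating linear functional. This is precisely the witness theorem of \cite{ganguly2011entanglement}, so I would cite that result rather than re-derive it from scratch. A secondary point I would check explicitly is that all of the geometry is referred to the bipartition $B_1|B_2$ of dimension $d\`$ and not to the ambient dimension $d$, so that "teleportation witness" is understood relative to $\mathcal{D}_{\mathbb{FAC}}(\mathcal{H}_B)$; once the correct threshold set is fixed, its convexity and closedness are inherited directly from \cite{ganguly2011entanglement} and no additional topological work is required.
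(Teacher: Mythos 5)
Your proposal is correct and follows essentially the same route as the paper: the forward direction unpacks the witness definition on outputs with $F\le 1/d`$, and the converse reduces to the witness-existence theorem of the cited reference (your contrapositive phrasing with explicit Hahn--Banach separation is just the paper's direct argument, which also delegates that separation step to the same citation). The only addition you make -- checking that the witness notion is referred to the $B_1|B_2$ bipartition of dimension $d`$ -- is a sensible clarification but not a different proof.
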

\textit{\textbf{Proof.}}  Let, $\mathcal{N}_{FA} \in$ $\text{non-local}-\mathbb{FAC}$. Therefore, $ F({\mathcal{N}_{FA}} ({\rho}_B)) \le \frac{1}{d`}$ for all  ${\rho}_B \in \mathcal{D}({\mathcal{H}_B})$. Our goal is to show that,  $ \text{Tr} (\mathcal{W} \mathcal{N}_{FA} (\rho_B)) $ is non-negative, for all $\rho_{B} \in  \mathcal{D}(\mathcal{H}_{B})$ and for all teleportation witness $\mathcal{W}$. Now, for arbitrary $\rho_B \in  \mathcal{D}({\mathcal{H}_B})$, 
%$ \text{Tr} (\mathcal{W} \mathcal{N} (\rho_B)) =   \text{Tr} (\mathcal{W}  \sigma_B)$
   \begin{equation}
      \text{Tr} (\mathcal{W} \mathcal{N}_{FA} (\rho_B))  =   \text{Tr} (\mathcal{W}  \sigma_B) \hspace{0.2cm} \text{where}, \hspace{0.2cm}  \mathcal{N}_{FA} ({\rho}_B) = {\sigma}_B
    \end{equation}
      Since, $\mathcal{N}_{FA} \in$ $\text{non-local}-\mathbb{FAC}$, therefore  $ F({\sigma}_B) \le \frac{1}{d`}$. Moreover, since $\mathcal{W}$ is a teleportation witness, hence it preserves non-negativity when acting on any state with fidelity of entanglement less than or equal to $\frac{1}{d`}$ \cite{ganguly2011entanglement}. Hence $ \text{Tr} (\mathcal{W} \mathcal{N}_{FA} (\rho_B)) \ge 0$. Thus for any ${\rho}_B \in \mathcal{D}({\mathcal{H}_B})$, we get $ \text{Tr} (\mathcal{W} \mathcal{N}_{FA} (\rho_B)) \ge 0$, which proves that $\text{Tr} (\mathcal{W} \mathcal{N}_{FA} (\rho_B))$ is non-negative for all $\rho_{B} \in  \mathcal{D}(\mathcal{H}_{AB})$ and for all teleportation witness $\mathcal{W}$.\\

        Conversely, we want to show that if  $\text{Tr} (\mathcal{W} \mathcal{N}_{FA} (\rho_B))$ is non-negative, for all $\rho_{B} \in  \mathcal{D}(\mathcal{H}_{AB})$ and for all teleportation witness $\mathcal{W}$ , then $\mathcal{N}_{FA} \in$ $\text{non-local}-\mathbb{FAC}$ . For arbitrary $\rho_B \in  \mathcal{D}({\mathcal{H}_B})$, consider
     \begin{equation}
         \text{Tr} (\mathcal{W} \mathcal{N}_{FA} (\rho_B))  =   \text{Tr} (\mathcal{W}  \sigma_B),  \hspace{0.2cm} \text{where}, \hspace{0.2cm} \mathcal{N}_{FA} ({\rho}_B) = {\sigma}_B
     \end{equation} 
      By assumption  $ \text{Tr} (\mathcal{W} \mathcal{N}_{FA} (\rho_B)) \ge 0$, which implies that $\text{Tr} (\mathcal{W} (\sigma_B)) \ge 0$, for all teleportation witness $\mathcal{W}$. Hence, $F(\sigma_B) \le \frac{1}{d`}$  \cite{ganguly2011entanglement}, which implies   $F({\mathcal{N}_{FA}} ({\rho}_B)) \le 1/d`$ for all  ${\rho}_B \in \mathcal{D}({\mathcal{H}_B})$ i.e., $\mathcal{N}_{FA} \in$ $\text{non-local}-\mathbb{FAC}$. However, we note here that this condition is not very operational as it needs to be checked for all teleportation witnesses. \\
      
      We now present an illustration of $2$-local fidelity annihilating channel, through an example from depolarizing channels.

% \begin{theorem}
%     Let $\mathcal{N}\in ANFEF$ such that annihilate the FEF of maximally entangled state $|\phi^{+}\rangle$, then it annihilates the FEF of all other maximally entangled states.
% \end{theorem}
% {\bf{Proof:}} {\color{red}{Blank}}

\subsubsection{\texorpdfstring{\textit{\textbf{$2$-local fidelity annihilating channel}}}{2-local fidelity annihilating channel}}
We consider the depolarizing channel ${\mathcal{N}_d`}$ given as

\begin{align}\label{dclfa}
    \mathcal{N}_d`(*)=p(*)+(1-p)  \text{Tr}(*)  \frac{\mathbb{I}_{d`}}{d`}
\end{align}
where $p \in [0,1]$. The $2$-local depolarizing channel acts as follows:
\begin{equation}\label{out}
    \begin{split}
   & \omega_{out} = ({\mathcal{N}}_d` \otimes {\mathcal{N}}_d`) (\omega_{B_1 B_2}) \\
   & =    p^2 \omega_{B_1 B_2} + (1-p)^2 (\frac{\mathbb{I}_{d`}}{d`} \otimes \frac{\mathbb{I}_{d`}}{d`}) + p(1-p) (\omega_{B_1} \otimes \frac{\mathbb{I}_{d`}}{d`} + \frac{\mathbb{I}_{d`}}{d`} \otimes \omega_{B_2}) \\ 
   \end{split}
\end{equation}
where, $\omega_{B_1} = \text{Tr}_{B_2}(\omega_{B_1 B_2})$ and $\omega_{B_2} = \text{Tr}_{B_1}(\omega_{B_1 B_2 })$. To prove that $ {\mathcal{N}_d`}$ is a  $2$-local fidelity annihilating channel, it is necessary to verify this for all input states $\omega_{B_1 B_2}$. Since the set of states whose fidelity of entanglement is less than or equal to $\frac{1}{d`}$ forms a convex set \cite{ganguly2011entanglement}, it is sufficient to consider only pure states. Now, any pure state can always be expressed in its Schmidt decomposition form as follows. :
\begin{equation}\label{pst}
    \ket{\psi} = \sum_{j=0}^{d`-1} \sqrt{q_j} \ket{\phi_j} \otimes \ket{\widetilde{\phi}_j} 
\end{equation}
where $\ket{\phi_j}$ and $\ket{\widetilde{\phi}_j}$ form orthonormal bases for the two subsystems on Bob's side. 

\begin{itemize}
    \item Let us first consider the qubit depolarizing channel, i.e., putting $d`=2$ in Eq.\eqref{dclfa}. By implementing  ${\mathcal{N}}_2 \otimes {\mathcal{N}}_2$ on $\omega_{B_1 B_2}$, we get
\begin{equation}\label{outmat}
    \omega_{out} =
    \begin{bmatrix}
        \dfrac{(1-p)^{2}+4pq_{0}}{4} & 0 & 0 & \dfrac{4p^{2}\sqrt{q_{0}q_{1}}}{4} \\[6pt]
        0 & \dfrac{1-p^2}{4} & 0 & 0 \\[6pt]
        0 & 0 & \dfrac{1-p^2}{4} & 0 \\[6pt]
        \dfrac{4p^{2}\sqrt{q_{0}q_{1}}}{4} & 0 & 0 & \dfrac{(1-p)^{2}+4pq_{1}}{4}
    \end{bmatrix}
\end{equation}
where $p\in[0,1]$, $q_{0}\in[0,1]$ and $q_{0}=1-q_{1}$. Now to conclude that the qubit depolarizing channel is $2$-local $\mathbb{FAC}$ channel, we have to show that $F(\omega_{out})\leq1/2$ for some range of $p$. From \cite{horodecki1997inseparable}, we know that the fidelity of entanglement for the two-qubit state is $\frac{1}{4}[1+||T||_{1}]$ where $||T||_{1}=\text{Tr}|T|=\text{Tr}\sqrt{TT^{\dagger}}$ and $T$ is the correlation matrix. Therefore, the fidelity of entanglement of the output state $\omega_{out}$ is given as
\begin{align}
    F(\omega_{out})=\frac{1}{4}[1+p^{2}+4p^{2}\sqrt{q_{0}(1-q_{0})}]
\end{align}
If $F(\omega_{\text{out}}) \leq \frac{1}{2}$, then $\frac{1}{4}[1+p^{2}+4p^{2}\sqrt{q_{0}(1-q_{0})}] \leq \frac{1}{2}$ which simplifies to $p^{2}(1+4\sqrt{q_{0}(1-q_{0})})\leq 1$. %The inequality holds if the $p^{2}(1+4\sqrt{q_{0}(1-q_{0})})$ attains its maximum value, and this maximum value is less than or equal to $1$.
Now the term $p^{2}(1+4\sqrt{q_{0}(1-q_{0})})$ is maximum if the coefficient of $p^2$ is maximum. The maximum value of $1+4\sqrt{q_{0}(1-q_{0})}$ is $3$ at $q_{0}=\frac{1}{2}$. Therefore, $F(\omega_{\text{out}}) \leq \frac{1}{2}$ if $p\leq \frac{1}{\sqrt{3}}(\thickapprox 0.57735)$ for all $q_{0}$. Therefore, the qubit depolarizing channel belongs to $2$-local $\mathbb{FAC}$ for $p\leq 0.57735$. Additionally, as established in \cite{moravvcikova2010entanglement}, the qubit depolarizing channel also belongs to $2$-local $\mathbb{EA}$ for $p\leq 0.57735$. Thus, the channel satisfies the criteria for belonging to both  $2$-local $\mathbb{FAC}$ and $2$-local $\mathbb{EA}$ in the above range of $p$.

% It can be checked easily that 
% $F(\omega_{\text{out}}) \leq \frac{1}{2}$ for 
% $(0.57735 < p < 1$ and 
% $(0 \leq q_0 \leq 0.5 - 0.25 \beta$ or $0.5 + 0.25 \beta \leq q_0 \leq 1))$, 
% where $\beta = \sqrt{\frac{3p^4 + 2p^2 - 1}{p^4}}$. Hence, in the above mentioned region, the qubit depolarizing channel is a $2$-local $\mathbb{FAC}$ channel. 

%\textcolor{red}{Thus, we conclude that the qubit depolarizing channel is $2$-local $\mathbb{FAC}$ for some range of $p$ and $q_0$.}

\item Now, we consider the qutrit depolarizing channel i.e., putting $d`=3$ in \eqref{dclfa} and provide the precise range in which this channel is not $2$-locally fidelity annihilating channel. %We assume that each particle goes through the same evolution and their Hilbert spaces are isomorphic i.e. $\mathcal{E}_i = \mathcal{E}$ for all $i$. 
The action of the qutrit depolarizing channel is defined as 
\begin{equation} \label{depolarizing}
    {\mathcal{N}}_3 (*) = p (*) + (1-p)  \text{Tr}(*) \hspace{0.1cm} \frac{\mathbb{I}_{3}}{3}
\end{equation}
where $p \in [0,1]$. 

By implementing  ${\mathcal{N}_{3}} \otimes {\mathcal{N}_{3}}$ on $\omega_{B_1 B_2}$, we get
\begin{equation} 
\begin{split}
  \omega_{out} &   = \\
& \begin{bmatrix} 
	\frac{(p^2+2p)q_0}{3}+ t &0& 0 &0 &p^2 \sqrt{q_0 q_1}&0 &0&0&p^2 \sqrt{q_0 q_2}\\[0.1cm]
    0& 	s_1+t &0&0&0&0&0&0 &0\\[0.1cm]
    0&0& 	s_2+t	&0&0&0&0&0&0\\[0.1cm]
     0&0& 0	&s_1+t &0&0&0&0&0\\[0.1cm]
     p^2 \sqrt{q_0 q_1}&0&0& 0&\frac{(p^2+2p)q_1}{3}+ t &0&0&0&p^2 \sqrt{q_1 q_2}\\[0.1cm]
    	 0 &0&0&0&	0&s_3+t &0&0&0 \\[0.1cm]
    0& 0&0&0&0&0&s_2+t&0&0\\[0.1cm]
    0& 0&0&0&0&0&0&s_3+t&0\\[0.1cm]
  p^2 \sqrt{q_0 q_2}&0&0&0&p^2 \sqrt{q_2 q_1}&0&0&0& \frac{(p^2+2p)q_2}{3}+ t\\[0.1cm] \nonumber
	\end{bmatrix}
 \end{split}
	\end{equation}
where, $t=\frac{(1-p)^2}{9}$, $s_1=\frac{{p(1-p)}(q_0 +q_1)}{3}$, $s_2=\frac{{p(1-p)}(q_0 +q_2)}{3}$ and $s_3=\frac{{p(1-p)}(q_1 +q_2)}{3}$.\\

It is difficult to tell whether ${\mathcal{N}}_3$ is a $2$-local fidelity annihilating channel, as computation of the fidelity of entanglement for arbitrary density matrices is challenging. However, using a witness-based approach, we determine below the precise parameter region where the channel ${\mathcal{N}}_3$ \textit{is not a} $2$-local fidelity annihilating channel.

To evaluate the fidelity of entanglement of the output state $\omega_{\text{out}}$, we employ the teleportation witness $( \mathcal{W})$ defined in Eq.(\ref{tw}) for $d=3$.
% \cite{ganguly2011entanglement}: 
% \begin{equation}
%     \mathcal{W}=  \frac{{\color{red}{\mathbb{I}_9}}}{3} - \ket{{\phi}^+} \bra{{\phi}^+}
% \end{equation}
 % where, $\ket{{\phi}^{+}}=\frac{1}{\sqrt 3} \sum_{i=0}^2 \ket{ii}$.
 If we get $ \text{Tr}( \mathcal{W} \omega_{out}) < 0$, then we can conclude that the channel $\mathcal{N}_3$ is not a $2$-local fidelity annihilating channel. The minimum value of $ \text{Tr} (\mathcal{W}  \omega_{out})$ is $\frac{2-8p^2}{9}$, which is achieved for $q_0 = q_1=q_2= \frac{1}{3}$. It is straightforward to verify that for $p>\frac{1}{2}$,  $\frac{2-8p^2}{9}$ is negative, i.e., $ \text{Tr} (\mathcal{W}  \omega_{out}) < 0$. Therefore, the qutrit depolarizing channel is not a $2$-local fidelity annihilating channel for $p > \frac{1}{2}$.

\end{itemize}

%\subsection{Relation between $FBC$, $2$-local $FAC$ and nonlocal $FAC$}
\subsection{\texorpdfstring{Relationship between $\mathbb{FBC}$ and $\mathbb{FAC}$ with Other Resource-Breaking Quantum Channels}{Relationship between FBC and FAC with Other Resource-Breaking Quantum Channels}}
Here, we explore the relationship between $\mathbb{FBC}$ and $2$-local $\mathbb{FAC}$ with other resource breaking quantum channels, specifically with entanglement annihilating channels ($\mathbb{EA}$), negative conditional entropy breaking channels ($\mathbb{NCEBC}$) and negative conditional entropy annihilating channels ($\mathbb{NCEAC}$).\\

\textit{\textbf{Relation between $\mathbb{FBC}$ and $2$-local $\mathbb{FAC}$ }}: Let $\mathcal{N}\in \mathbb{FBC}$ be an arbitrary channel then for any state $\rho_{B}$, where $B$ is bipartite system, we have 

\begin{align} \label{2la}
    (\mathcal{N}\otimes\mathcal{N})(\rho_{B})&= (\mathcal{I}\otimes\mathcal{N}) (\mathcal{N}\otimes\mathcal{I})(\rho_{B}) \nonumber \\
    &= (\mathcal{I}\otimes\mathcal{N})(\sigma^{1}_{B}) \nonumber \\
    &= \sigma^{2}_{B}
\end{align}
We thus have $F(\sigma^{2}_{B})\leq 1/d`$ (since $\mathcal{N}\in \mathbb{FBC}$) which implies that $\mathcal{N}\in$ $2$-local $\mathbb{FAC}$. Thus, we can say that 
\begin{equation}
 \mathbb{FBC} \subseteq 2\text{-local}~\mathbb{FAC}  
\end{equation}

\textbf{\textit{Relation between $2$-local $\mathbb{FAC}$ and $2$-local $\mathbb{EA}$}}~: 
 A single particle channel $\mathcal{N}_{EA}$ belongs to $2$-local $\mathbb{EA}$ if applying $\mathcal{N}_{EA}\otimes \mathcal{N}_{EA}$ on a bipartite subsystem destroys entanglement within that subsystem \cite{moravvcikova2010entanglement}. As a consequence, the fidelity of entanglement of the output state after applying $\mathcal{N}_{EA}\otimes \mathcal{N}_{EA}$ is less than or equal to $1/d$, which implies that all $2$-local entanglement annihilating channels are necessarily $2$-local fidelity annihilating channels. Hence the following relation must hold:
\begin{equation}
   2\text{-local} \hspace{0.2cm}\mathbb{EA} \subseteq 2\text{-local} \hspace{0.2cm} \mathbb{FAC}
\end{equation}

% We consider the qubit depolarizing channel defined as, 
% \begin{equation} \label{depolarizing}
%      \mathcal{N}_2(*)=p (*)+(1-p)\frac{I_{2}}{2}
% \end{equation}
% where $p \in [0,1]$. It is known that the qubit depolarizing channel is not $2$-local $\mathbb{EA}$ for $p>0.57735$ \cite{moravvcikova2010entanglement}. Now, we have previously shown that this channel is $2$-local $\mathbb{FAC}$ for $\{0.57735<p<1\}~\text{and}~ \{(0\leq q_{0}\leq 0.5-0.25\beta)\cup (0.5+0.25\beta<q_{0}\leq1)\}$. Thus, we conclude that qubit depolarizing channel $\mathcal{N}_2$ is $2$-local $\mathbb{FAC}$ channel and not $2$-local $\mathbb{EA}$ channel for $\{0.57735<p<1\}~\text{and}~ \{(0\leq q_{0}\leq 0.5-0.25\beta)\cup (0.5+0.25\beta \leq q_{0}\leq1)\}$.
% {\color{red} To be FAC there is a dependency on the Schmidt coefficient $q_0$, however there is no dependency as such for EA channels. This needs to be explained.}

{\textit{\textbf{Relation between $2$-local $\mathbb{FAC}$ and $2$-local $\mathbb{NCEAC}$}}} : Here, we examine the relation between $2$-local $\mathbb{FAC}$ and $2$-local $\mathbb{NCEAC}$. Note that a $2$-local negative conditional entropy annihilating channel acts on a bipartite system $B$ and destroys
the negative conditional entropy within the system $B$ \cite{srinidhi2024quantum}. To explore the connection between these two channels, we consider a qubit depolarizing channel $\mathcal{N}_2$ as defined in Eq.(\ref{dclfa}). Now, to find the exact parameter region where the qubit depolarizing channel belongs to $2$-local $\mathbb{NCEAC}$, we have to consider the negative conditional entropy for all input states $\omega_{B_1 B_2}$, similarly to Eq.\eqref{out}.  Since the set of states with nonnegative conditional entropy forms a convex set \cite{vempati2021witnessing}, it is sufficient to consider only the pure states defined in Eq. (\ref{pst}). Now, after applying the qubit depolarizing on $\omega_{B_1 B_2}$, we obtain $\omega_{out}$ as defined in Eq.\eqref{outmat}. The conditional entropy of $\omega_{out}$ is given by 
\begin{align}
    S(B_1|B_2) &= -\lambda_{1}\log_{2}\lambda_{1} 
    -\lambda_{2}\log_{2}\lambda_{2} 
    -\lambda_{3}\log_{2}\lambda_{3} \notag \\[2ex]
    &\quad -\lambda_{4}\log_{2}\lambda_{4} 
    +\lambda_{5}\log_{2}\lambda_{5}  
    + \lambda_{6}\log_{2}\lambda_{6}
\end{align}
where  
\begin{equation*}
    \begin{aligned}
        \lambda_{1} &= \frac{1 - p^2}{4}, ~~~~~~~~~~~~~~~ \lambda_{2}= \frac{1 - p^2}{4},\\
        \lambda_{3}& = \frac{1 + p^2 - 2\sigma_{1}}{4}, ~~~~~~\lambda_{4} = \frac{1 + p^2 + 2\sigma_{1}}{4},\\
        \lambda_{5} &= \frac{1 - p + 2pq_{0}}{2},~~~~~~~~ \lambda_{6} = \frac{1 + p - 2pq_{0}}{2} , 
    \end{aligned}
\end{equation*}
with
\begin{equation*}
    \sigma_{1} = \sqrt{p^2 - 4 p^2 q_{0} + 4 p^4 q_{0} + 4 p^2 q_{0}^2 - 4 p^4 q_{0}^2}.
\end{equation*}

It can be checked that the conditional entropy $S(B_1|B_2)$ of $\omega_{out}$ is nonnegative for $p \leq 0.86465$. Therefore, the qubit depolarizing channel $\mathcal{N}_{2}$ belongs to $2$-local $\mathbb{NCEAC}$ for  $p \leq 0.86465$. Also, as we have proved earlier, the qubit depolarizing channel $\mathcal{N}_2$ belongs to $2$-local $\mathbb{FAC}$ for $p\leq 0.57735$. Hence, for $ p \leq 0.57735$, the qubit depolarizing channel belongs to both $2$-local $\mathbb{FAC}$ and $2$-local $\mathbb{NCEAC}$ i.e., there exists at least one channel that belongs to both $2$-local $\mathbb{NCEAC}$  and $2$-local $\mathbb{FAC}$ which implies
\begin{equation}
    2\text{-local}~\mathbb{FAC} \cap 2\text{-local}~\mathbb{NCEAC} \neq \emptyset
\end{equation}
Again, this channel does not belong to $2$-local $\mathbb{FAC}$ but belongs to $2$-local $\mathbb{NCEAC}$ for $0.57735< p \leq 0.86465 $. This indicates that there exists at least one channel that does not belong to $2$-local $\mathbb{FAC}$ but belongs to $2$-local $\mathbb{NCEAC}$. Hence, we have the following relation: 
\begin{equation}
 2\text{-local}~\mathbb{NCEAC}   \not \subset~  2\text{-local}~\mathbb{FAC}
\end{equation}
\textit{\textbf{Relation between $\mathbb{FBC}$ and $\mathbb{NCEBC}$}} : 
 Before analyzing the relationship between $\mathbb{FBC}$ and $\mathbb{NCEBC}$, we first examine the conditions under which the qubit depolarizing channel belongs to $\mathbb{FBC}$. The action of the qubit depolarizing channel is given by
\begin{align}\label{FBdclfa}
\mathcal{N}_2(\rho) = p \rho + (1 -  p)\text{Tr}(\rho) \frac{\mathbb{I}_{2}}{2}
\end{align}
where $0 \leq p \leq 1$. Now the action of channel $\mathcal{I}\otimes \mathcal{N}_{2}$ on input state $\omega_{12}$ is as follows: 
\begin{equation}\label{FBout}
    \begin{split}
   & \omega_{out} = (\mathcal{I} \otimes {\mathcal{N}}_2) (\omega_{12}) \\
   & =   p\omega_{12} + (1-p)(\omega_{1}\otimes \frac{\mathbb{I}_{2}}{2}) \\ 
   \end{split}
\end{equation}
where, $\omega_{1} = \text{Tr}_{2}(\omega_{12})$. To prove that qubit-depolarizing channel ${\mathcal{N}_2}$ is a fidelity breaking channel, it is necessary to verify this for all input states $\omega_{12}$. Since the set of states whose fidelity of entanglement is less than or equal to $\frac{1}{2}$ forms a convex set \cite{ganguly2011entanglement}, it is sufficient to consider only pure states.
%$\omega_{12}=|\psi\rangle\langle\psi|$. 
Now, any pure state $|\psi\rangle$ can always be expressed in its Schmidt decomposition form as follows :
\begin{equation}\label{FBpst}
    \ket{\psi} = \sum_{j=0}^{1} \sqrt{q_j} \ket{\phi_j} \otimes \ket{\widetilde{\phi}_j} 
\end{equation}
where $\ket{\phi_j}$ and $\ket{\widetilde{\phi}_j}$ form orthonormal bases for the two subsystems on Bob's side.
The output state $\omega_{out}$ is given as
\begin{equation}\label{FBoutmat}
    \omega_{out} =
    \begin{bmatrix}
        \dfrac{2pq_{0}+(1-p)q_{0}}{2} & 0 & 0 & p\sqrt{q_{0}q_{1}}\\[6pt]
        0 & \dfrac{(1-p)q_{0}}{2} & 0 & 0 \\[6pt]
        0 & 0 & \dfrac{(1-p)q_{1}}{2} & 0 \\[6pt]
        p\sqrt{q_{0}q_{1}} & 0 & 0 & \dfrac{2pq_{1}+(1-p)q_{1}}{2}
    \end{bmatrix}
\end{equation}
where $q_{0}\in[0,1]$ and $q_{0}=1-q_{1}$. Now to conclude that the qubit depolarizing channel belongs to $\mathbb{FBC}$, we have to show that $F(\omega_{out})\leq1/2$ for a certain range of the parameter $p$. From \cite{horodecki1997inseparable}, we know that the fidelity of entanglement for the two-qubit state is $\frac{1}{4}[1+||T||_{1}]$ where $||T||_{1}=\text{Tr}|T|=\text{Tr}\sqrt{TT^{\dagger}}$ and $T$ is the correlation matrix. Therefore, the fidelity of entanglement of the output state $\omega_{out}$ is given as
\begin{align}
    F(\omega_{out})=\frac{1}{4}[1+p+4p\sqrt{q_{0}(1-q_{0})}]
\end{align}
If $F(\omega_{\text{out}}) \leq \frac{1}{2}$, then $\frac{1}{4}[1+p+4p\sqrt{q_{0}(1-q_{0})}] \leq \frac{1}{2}$ which simplifies to $p(1+4\sqrt{q_{0}(1-q_{0})})\leq 1$. Now the term $p(1+4\sqrt{q_{0}(1-q_{0})})$ is maximum if the coefficient of $p$ is maximum. The maximum value of $1+4\sqrt{q_{0}(1-q_{0})}$ is $3$ at $q_{0}=\frac{1}{2}$. Therefore, $F(\omega_{out}) \leq \frac{1}{2}$ if $p\leq \frac{1}{3}(\thickapprox 0.33333)$ for all $q_{0}$. Therefore, the qubit depolarizing channel belongs to $\mathbb{FBC}$ for $p\leq 0.33333$.\\

We now examine the conditions under which the channel $\mathcal{N}_2$ belongs to $\mathbb{NCEBC}$. To establish this, we compute the conditional entropy of the output state and identify the precise range in which it remains non-negative. Any arbitrary two-qubit pure entangled state can be written as follows:
\begin{align}
    |\psi\rangle = \cos\alpha|00\rangle +\sin\alpha|11\rangle
\end{align}
where $\alpha \in [0,\pi]$. The conditional entropy of the output state $\sigma_{AB} = \ket{\psi} \bra{\psi}$ is given by
\begin{align}
    S(A|B) &= -\lambda_{1}\log_{2}\lambda_{1} 
    -\lambda_{2}\log_{2}\lambda_{2} 
    -\lambda_{3}\log_{2}\lambda_{3} \notag\\[2ex]
    &\quad -\lambda_{4}\log_{2}\lambda_{4} 
    +\lambda_{5}\log_{2}\lambda_{5}  
    + \lambda_{6}\log_{2}\lambda_{6}
\end{align}
where  
\begin{equation*}
    \begin{aligned}
        \lambda_{1} &= \dfrac{1 - p}{2}\cos^2\alpha, ~~~~~ \lambda_{2}= \dfrac{1 - p}{2}\sin^2\alpha,\\
        \lambda_{3}& = \dfrac{2+2p-\sigma_{1}}{8}, ~~~~~~\lambda_{4} = \dfrac{2+2p+\sigma_{1}}{8},\\
        \lambda_{5} &= \dfrac{1 + p \cos2\alpha}{2},~~~~~~ \lambda_{6} = \dfrac{1 - p \cos2\alpha}{2}, 
    \end{aligned}
\end{equation*}
with
\begin{equation*}
    \sigma_{1} = \sqrt{2 + 4p + 10p^2 + 2\cos4\alpha + 4p\cos4\alpha-6p^2\cos4\alpha}.
\end{equation*}

According to \cite{muhuri2023information}, evaluating the output corresponding to a maximally entangled input state is sufficient to determine whether an unital channel is a negative conditional entropy breaking channel. We obtain the conditional entropy is maximized for $\alpha=\frac{\pi}{4}$, and is non-negative for $p\leq 0.747614$. Hence, the qubit depolarizing channel $\mathcal{N}_2$ belongs to $\mathbb{NCEBC}$ for $p\leq 0.747614$.\\

Therefore, we can conclude that for $p\leq0.333333$, the qubit depolarizing channel $\mathcal{N}_{2}$ belongs to both $\mathbb{FBC}$ and $\mathbb{NCEBC}$. Hence, there exists at least one channel that belongs to both $\mathbb{FBC}$ and $\mathbb{NCEBC}$. This confirms that the intersection of these two sets is non-empty, i.e.,
\begin{equation}
    \mathbb{FBC}~\cap~ \mathbb{NCEBC} \neq\emptyset
\end{equation} 
Moreover the qubit depolarizing channel is an element of $\mathbb{NCEBC}$ but does not belong to $\mathbb{FBC}$ for $0.333333 < p \leq 0.747614$. This illustrates the existence of a channel that lies within $\mathbb{NCEBC}$ yet falls outside of $\mathbb{FBC}$. Hence,
\begin{equation}
    \mathbb{NCEBC}~ \not \subset~ \mathbb{FBC}
\end{equation}
     
Motivated by the connections between fidelity breaking and annihilating channels with negative conditional entropy breaking and annihilating channels, in the next section, we explore how the fidelity of entanglement and various entropies are related in the realm of quantum states.

		\section{Quantum Entropies and fidelity of entanglement for two-qubit states} \label{IV} 
        In \cite{kumar2024quantum}, the relationship between fidelity of entanglement and quantum conditional entropy was studied for a specific class of states—those with maximally mixed marginals. Here, we extend that investigation to arbitrary two-qubit states, and further examine the connections not only with quantum conditional entropy, but also with other quantum entropy measures, including the min-Rényi entropy and its associated conditional form.
        \subsection{General two-qubit state}
        For general two-qubit state $\rho_{AB}$ from Eq.(\ref{2qgs}), the correlation matrix defined in Eq.(\ref{cmat}) is given by
\begin{center}
    $ T =
    \begin{bmatrix}
      t_{11} & t_{12} & t_{13}\\[6pt]
      t_{21} & t_{22} & t_{23}\\[6pt]
      t_{31} & t_{32} & t_{33}
     \end{bmatrix}
     $
\end{center}
The singular values of correlation matrix $T$ are $\sigma_{i}$, $i=1,2,3$, and without loss of generality we take $\sigma_{1}\geq\sigma_{2}\geq\sigma_{3}\geq 0.$ The trace norm $||T||_{1}$ and Frobenius norm $||T||_{2}$ are respectively given by 
\begin{align}{\label{tn}}
    ||T||_{1} = \text{Tr}(\sqrt{T^{\dagger}T}) = \sum_{i=1}^3\sigma_{i}
\end{align}
\begin{align}{\label{fn}}
    ||T||_{2} = [\text{Tr}(T^{\dagger}T)]^{\frac{1}{2}} = (\sum_{i=1}^3\sigma_{i}^{2})^{\frac{1}{2}}
\end{align}
Now, using Eq.(\ref{tn}) and Eq.(\ref{fn}), we have the following lemma.
\begin{lemma}{\label{lm2}}
    $||T||_{1}>1$ if and only if $||T||_{2}^{2} > 1-R$, where $R=2(\sigma_{1}\sigma_{2}+\sigma_{1}\sigma_{3}+\sigma_{2}\sigma_{3})$
\end{lemma}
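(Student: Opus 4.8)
The plan is to reduce the stated equivalence to a single exact identity relating the two norms, after which the biconditional follows by an elementary monotonicity argument. The starting observation is that, because the singular values are non-negative, $||T||_{1} = \sigma_{1}+\sigma_{2}+\sigma_{3} \geq 0$, so squaring is order-preserving on the relevant range and $||T||_{1} > 1$ is equivalent to $||T||_{1}^{2} > 1$. This step is what removes the apparent mismatch between the linear quantity $||T||_{1}$ appearing on the left of the claim and the quadratic quantities $||T||_{2}^{2}$ and $R$ appearing on the right.

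Next I would expand the square directly, using Eq.(\ref{tn}) and Eq.(\ref{fn}) together with the definition of $R$:
\begin{equation*}
||T||_{1}^{2} = (\sigma_{1}+\sigma_{2}+\sigma_{3})^{2} = \sum_{i=1}^{3}\sigma_{i}^{2} + 2\sum_{i<j}\sigma_{i}\sigma_{j} = ||T||_{2}^{2} + R.
\end{equation*}
This identity is the crux of the whole argument; everything else is bookkeeping. Chaining the two observations gives $||T||_{1} > 1 \iff ||T||_{1}^{2} > 1 \iff ||T||_{2}^{2} + R > 1 \iff ||T||_{2}^{2} > 1 - R$, which is precisely the assertion of the lemma, and the same chain read backwards establishes the converse with no extra work.

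The only point demanding care — and the closest thing here to an obstacle — is the squaring step, which genuinely requires $||T||_{1} \geq 0$: if the left-hand quantity could be negative, the equivalence $||T||_{1} > 1 \iff ||T||_{1}^{2} > 1$ would break down. Since $||T||_{1}$ is a sum of non-negative singular values this is automatic, so the argument goes through cleanly. I would emphasise that no genuine inequality (Cauchy--Schwarz, power-mean, or otherwise) is invoked anywhere: the relationship $||T||_{1}^{2} = ||T||_{2}^{2} + R$ is an exact algebraic identity, and the lemma is merely its restatement once the non-negativity of $||T||_{1}$ is used to pass between the norm and its square.
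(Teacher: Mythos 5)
Your proof is correct and matches the route the paper intends: the paper gives no explicit proof but states the lemma as following directly from Eqs.~(\ref{tn}) and (\ref{fn}), i.e.\ from the identity $||T||_{1}^{2}=||T||_{2}^{2}+R$ together with the non-negativity of $||T||_{1}$, which is exactly your argument. Nothing further is needed.
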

Using Eq.(\ref{fgh}) and Lemma(\ref{lm2}), we have the following result.
\begin{theorem}{\label{tf}}
    The fidelity of entanglement of the general two-qubit state $\rho_{AB}$ is strictly greater than half, i.e., $F(\rho_{AB})>1/2$ iff $||T||_{2}^{2} > 1-R$, where $R=2(\sigma_{1}\sigma_{2}+\sigma_{1}\sigma_{3}+\sigma_{2}\sigma_{3})$
\end{theorem}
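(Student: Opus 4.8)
The plan is to obtain Theorem~\ref{tf} as an immediate composition of two equivalences that are already available in the excerpt, so essentially no new work is required. The first ingredient is Eq.~(\ref{fgh}), the characterisation $F(\rho_{AB})>\frac{1}{2}\Longleftrightarrow ||T||_{1}>1$ valid for any two-qubit state. The second is Lemma~\ref{lm2}, namely $||T||_{1}>1\Longleftrightarrow ||T||_{2}^{2}>1-R$ with $R=2(\sigma_{1}\sigma_{2}+\sigma_{1}\sigma_{3}+\sigma_{2}\sigma_{3})$. Since both are genuine biconditionals, transitivity of logical equivalence chains them directly with no extra hypotheses.

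Concretely, I would write the chain
\begin{align*}
F(\rho_{AB})>\frac{1}{2}\;\Longleftrightarrow\; ||T||_{1}>1 \;\Longleftrightarrow\; ||T||_{2}^{2}>1-R,
\end{align*}
where the first step is Eq.~(\ref{fgh}) and the second is Lemma~\ref{lm2}. Reading off the two outer ends of this chain yields $F(\rho_{AB})>\frac{1}{2}\Longleftrightarrow ||T||_{2}^{2}>1-R$, which is exactly the assertion of the theorem, completing the argument.

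Because the result is just a two-link chaining of facts established earlier, there is no genuine obstacle at the level of the theorem itself; all the substantive content has been pushed into Lemma~\ref{lm2}. For my own reassurance I would sanity-check that lemma from the singular-value formulas $||T||_{1}=\sigma_{1}+\sigma_{2}+\sigma_{3}$ and $||T||_{2}^{2}=\sigma_{1}^{2}+\sigma_{2}^{2}+\sigma_{3}^{2}$: expanding the square gives the identity $||T||_{1}^{2}=||T||_{2}^{2}+R$, and since the $\sigma_{i}$ are non-negative one has $||T||_{1}\ge 0$, so $x\mapsto x^{2}$ is monotone on the relevant range and $||T||_{1}>1\Longleftrightarrow ||T||_{1}^{2}>1\Longleftrightarrow ||T||_{2}^{2}+R>1\Longleftrightarrow ||T||_{2}^{2}>1-R$. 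The single point that demands any care, and the only place a hasty argument could break, is this appeal to $||T||_{1}\ge 0$ when passing between $||T||_{1}>1$ and $||T||_{1}^{2}>1$; every other manipulation is routine algebra.
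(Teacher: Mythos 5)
Your proposal matches the paper's own derivation exactly: the paper obtains Theorem~\ref{tf} by chaining Eq.~(\ref{fgh}) with Lemma~\ref{lm2}, precisely as you do. Your additional verification of Lemma~\ref{lm2} via the identity $||T||_{1}^{2}=||T||_{2}^{2}+R$ and the non-negativity of the singular values is correct and is the natural justification the paper leaves implicit.
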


\subsubsection{{\textbf{Relation with R\'enyi entropy}}} 
In this subsection, we investigate the relation between R\'enyi entropy and the fidelity of entanglement of the general two-qubit state. The R\'enyi $2$-entropy of the general two-qubit state $\rho_{AB}$ is given by \begin{align}
      S_{2}(AB) = \log_{2}[\frac{4}{1+||\vec{a}||^{2}+||\vec{b}||^{2}+||T||_{2}^{2}}]
  \end{align} 
  where $||\cdot||$ is the Euclidean norm and $||T||_{2}^{2}=\text{Tr}(T^{\dagger}T)$.
    \begin{theorem}
        The fidelity of entanglement of the general two-qubit state $\rho_{AB}$ is strictly greater than half, i.e., $F(\rho_{AB})>1/2$ iff $S_{2}(AB)<\log_{2}\Gamma$ where $\Gamma > 0$ and $\Gamma=\frac{4}{2+||\vec{a}||^2+||\vec{b}||^2-R}$
    \end{theorem}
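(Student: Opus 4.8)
The plan is to reduce the claimed equivalence entirely to Theorem \ref{tf}, which already establishes that $F(\rho_{AB}) > 1/2$ holds if and only if $||T||_{2}^{2} > 1 - R$. Granting that, it suffices to show that the purely geometric condition $||T||_{2}^{2} > 1 - R$ is equivalent to the entropic condition $S_{2}(AB) < \log_{2}\Gamma$. The entire argument is then a short chain of equivalent inequalities, with the strict monotonicity of $\log_{2}$ and of the reciprocal map on the positive reals supplying the two nontrivial passages.

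First I would substitute the given closed form $S_{2}(AB) = \log_{2}\!\big[4/(1 + ||\vec{a}||^{2} + ||\vec{b}||^{2} + ||T||_{2}^{2})\big]$ into the target inequality $S_{2}(AB) < \log_{2}\Gamma$. Because $\log_{2}$ is strictly increasing, this is equivalent to $4/(1 + ||\vec{a}||^{2} + ||\vec{b}||^{2} + ||T||_{2}^{2}) < \Gamma$. Writing $\Gamma = 4/(2 + ||\vec{a}||^{2} + ||\vec{b}||^{2} - R)$, both sides are positive fractions with numerator $4$: the left-hand denominator $D := 1 + ||\vec{a}||^{2} + ||\vec{b}||^{2} + ||T||_{2}^{2}$ is strictly positive trivially, while the right-hand denominator $E := 2 + ||\vec{a}||^{2} + ||\vec{b}||^{2} - R$ is positive precisely by the standing hypothesis $\Gamma > 0$.

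Next I would invoke the fact that for positive $D, E$ the inequality $4/D < 4/E$ is equivalent to $D > E$. Expanding $D > E$ and cancelling the common terms $||\vec{a}||^{2} + ||\vec{b}||^{2}$ yields $1 + ||T||_{2}^{2} > 2 - R$, that is, $||T||_{2}^{2} > 1 - R$. This is exactly the condition appearing in Theorem \ref{tf}, which closes the loop and gives $S_{2}(AB) < \log_{2}\Gamma \iff ||T||_{2}^{2} > 1 - R \iff F(\rho_{AB}) > 1/2$, as claimed.

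The computation itself is routine; the only point demanding care is the positivity of $E$, since reversing the inequality upon passing to reciprocals is legitimate only when both denominators share the same sign. I would therefore make explicit that the hypothesis $\Gamma > 0$ is precisely what guarantees $E > 0$, so that no sign ambiguity arises in the reciprocal step, and that $D > 0$ holds automatically as a sum of $1$ with nonnegative norms. I expect this bookkeeping — ensuring that both the reciprocal passage and the logarithm passage are applied over the correct (positive) range — to be the only delicate part of an otherwise direct algebraic verification.
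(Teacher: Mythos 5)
Your proposal is correct and follows essentially the same route as the paper's own proof: both reduce the statement to Theorem \ref{tf} and then pass through the same chain of equivalent inequalities via the monotonicity of $\log_2$ and the reciprocal map. Your explicit attention to the positivity of the denominator $2+||\vec{a}||^2+||\vec{b}||^2-R$ (guaranteed by the hypothesis $\Gamma>0$) is a welcome bit of care that the paper's proof leaves implicit, but it does not change the argument.
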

    \renewcommand{\proofname}{\textbf{Proof}} % Makes "Proof" explicitly bold

\begin{proof}
    \smallskip
    Let $F(\rho_{AB}) > 1/2$
    
    \smallskip
    $\Longleftrightarrow ||T||_{2}^{2} > 1 - R$, \ using Theorem (\ref{tf})
    
    \smallskip
    $\Longleftrightarrow \frac{1 + ||\vec{a}||^{2} + ||\vec{b}||^{2} + ||T||_{2}^{2}}{4} > \frac{2 + ||\vec{a}||^{2} + ||\vec{b}||^{2} - R}{4}$
    
    \smallskip
    $\Longleftrightarrow \log_{2} [ \frac{4}{1 + ||\vec{a}||^{2} + ||\vec{b}||^{2} + ||T||_{2}^{2}} ] < \log_{2} [ \frac{4}{2 + ||\vec{a}||^{2} + ||\vec{b}||^{2} - R} ]$
    
    \smallskip
    $\Longleftrightarrow S_{2}(AB) < \log [ \frac{4}{2 + ||\vec{a}||^2 + ||\vec{b}||^2 - R} ].$
    
    \renewcommand{\qedsymbol}{}
\end{proof}

Conditional R\'enyi $2$-entropy of general two-qubit state $\rho_{AB}$ is given by 
\begin{align}
    S_{2}(A|B)=\log_{2}[\frac{2+2||\vec{b}||^{2}}{1+||\vec{a}||^{2}+||\vec{b}||^{2}+||T||_{2}^{2}}]
\end{align}
  \begin{theorem}
      The fidelity of entanglement of general two-qubit state $\rho_{AB}$ is strictly greater than half iff $S_{2}(A|B)<\log_{2}\Delta$ where $\Delta>0$ and $\Delta=\frac{2+ 2 ||\vec{b}||^2}{2+||\vec{a}||^2+||\vec{b}||^2-R}$.
  \end{theorem}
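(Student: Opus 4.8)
The plan is to reproduce, \emph{mutatis mutandis}, the chain of equivalences used in the immediately preceding $S_{2}(AB)$ theorem, the only change being that the numerator of the relevant fraction is now $2+2||\vec{b}||^{2}$ in place of $4$. I would begin from the characterization in Theorem~\ref{tf}, namely $F(\rho_{AB})>1/2 \Longleftrightarrow ||T||_{2}^{2} > 1-R$, and transform the right-hand inequality into the claimed statement about the conditional R\'enyi $2$-entropy $S_{2}(A|B)=\log_{2}[\,(2+2||\vec{b}||^{2})/(1+||\vec{a}||^{2}+||\vec{b}||^{2}+||T||_{2}^{2})\,]$.

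First I would add the common quantity $1+||\vec{a}||^{2}+||\vec{b}||^{2}$ to both sides of $||T||_{2}^{2}>1-R$, obtaining the equivalent inequality $1+||\vec{a}||^{2}+||\vec{b}||^{2}+||T||_{2}^{2} > 2+||\vec{a}||^{2}+||\vec{b}||^{2}-R$; this is an equivalence since it is merely a translation by a fixed real number. Next, noting that both sides are strictly positive, I would take reciprocals (reversing the inequality) and then multiply through by the positive factor $2+2||\vec{b}||^{2}$. The left-hand side then becomes exactly the argument of the logarithm defining $S_{2}(A|B)$, while the right-hand side becomes $\Delta=(2+2||\vec{b}||^{2})/(2+||\vec{a}||^{2}+||\vec{b}||^{2}-R)$. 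Applying the increasing function $\log_{2}$ to both sides yields $S_{2}(A|B)<\log_{2}\Delta$, and since every step is reversible the desired equivalence follows.

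The main obstacle is not the algebra but the positivity conditions that legitimise taking reciprocals and the logarithm. I must secure $2+||\vec{a}||^{2}+||\vec{b}||^{2}-R>0$, which simultaneously guarantees $\Delta>0$ and validates reciprocation in both directions of the biconditional. Since $R=2(\sigma_{1}\sigma_{2}+\sigma_{1}\sigma_{3}+\sigma_{2}\sigma_{3})$, this reduces to a bound on the singular values of $T$ that should follow from $\rho_{AB}\ge 0$ (the same positivity that controls the denominator $1+||\vec{a}||^{2}+||\vec{b}||^{2}+||T||_{2}^{2}$ of $\text{Tr}(\rho_{AB}^{2})$). Because this positivity was already needed for $\Gamma$ in the analogous $S_{2}(AB)$ theorem, I would either invoke it there or isolate a short lemma bounding $R$; with that in hand, the remaining argument is the one-line chain of equivalences sketched above.
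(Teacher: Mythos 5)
Your proposal follows exactly the chain of equivalences the paper uses: invoke Theorem~\ref{tf}, translate both sides by $1+||\vec{a}||^{2}+||\vec{b}||^{2}$, take reciprocals, multiply by $2+2||\vec{b}||^{2}$, and apply $\log_{2}$. Your additional care about verifying $2+||\vec{a}||^{2}+||\vec{b}||^{2}-R>0$ before reciprocating is a point the paper glosses over (it simply asserts $\Delta>0$ in the statement), but the argument is otherwise the same.
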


  \begin{proof}
    \smallskip
    Let $F(\rho_{AB}) > 1/2$
    
    \smallskip
    $\Longleftrightarrow ||T||_{2}^{2} > 1 - R$, \ using Theorem (\ref{tf})
    
    \smallskip
    $\Longleftrightarrow \frac{1}{1 + ||\vec{a}||^{2} + ||\vec{b}||^{2} + ||T||_{2}^{2}} < \frac{1}{2 + ||\vec{a}||^{2} + ||\vec{b}||^{2} - R}$
    
    \smallskip
    $\Longleftrightarrow \log_{2} [ \frac{2 + 2||\vec{b}||^{2}}{1 + ||\vec{a}||^{2} + ||\vec{b}||^{2} + ||T||_{2}^{2}} ] < \log_{2} [ \frac{2 + 2||\vec{b}||^{2}}{2 + ||\vec{a}||^{2} + ||\vec{b}||^{2} - R} ]$
    
    \smallskip
    $\Longleftrightarrow S_{2}(A|B) < \log [ \frac{2 + 2||\vec{b}||^{2}}{2 + ||\vec{a}||^2 + ||\vec{b}||^2 - R} ].$
    
    \renewcommand{\qedsymbol}{}
\end{proof}

Now, we see the relation between the Min-R\'enyi entropy and the fidelity of entanglement. Since Min-R\'enyi entropy is given by \begin{align}\label{met}
      S_{\infty}(AB) = -\log_{2}||\rho_{AB}||_{O}
  \end{align}
  where $||\cdot||_{O}$ denotes the operator norm. In addition, the operator norm $||\rho_{AB}||_{O}$ is the largest eigenvalue of $\rho_{AB}$. From \cite{rui2010estimation}, for any two-qubit state $\rho_{AB}$, the fidelity of entanglement $F(\rho_{AB})$ satisfies $F(\rho_{AB}) \leq \text{max}_{i}\{\lambda_{i}\}$ where $\lambda_{i}$ are eigenvalues of state $\rho_{AB}$. Since, the operator norm $||\rho_{AB}||_{O}$ is the largest eigenvalue of state $\rho_{AB}$. So, we have the following inequality
  \begin{align}{\label{fev}}
      F(\rho_{AB}) \leq ||\rho_{AB}||_{O} 
  \end{align}
  where $||\rho_{AB}||_{O}$ is the largest eigenvalue of $\rho_{AB}$. Using Eq.(\ref{met}) and Eq.(\ref{fev}), we have: $F(\rho_{AB})\leq ||\rho_{AB}||_{O} \Leftrightarrow \log_{2}F(\rho_{AB}) \leq \log_{2}||\rho_{AB}||_{O} \Leftrightarrow S_{\infty}(AB) \leq -\log_{2}F(\rho_{AB}).$ So, the following result is obtained, 
  \begin{theorem}\label{tfme}
      For any general two-qubit state $\rho_{AB}$, the fidelity of entanglement and the Min-R\'enyi entropy satisfy $S_{\infty}(AB) \leq -\log_{2}F(\rho_{AB}).$
  \end{theorem}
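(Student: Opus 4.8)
The plan is to reduce the statement to a single elementary inequality combined with the monotonicity of the logarithm. The entire content of the theorem is carried by the bound $F(\rho_{AB}) \le ||\rho_{AB}||_{O}$, i.e., the fidelity of entanglement never exceeds the largest eigenvalue of the state; once this is secured, the claim follows by applying $\log_{2}$ and flipping a sign.

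First I would establish the eigenvalue bound. Starting from the defining expression $F(\rho_{AB}) = \max_{|\phi\rangle\in S}\langle\phi|\rho_{AB}|\phi\rangle$, where $S$ is the set of maximally entangled pure states, I observe that each $\langle\phi|\rho_{AB}|\phi\rangle$ is the Rayleigh quotient of $\rho_{AB}$ at a unit vector, hence bounded above by the largest eigenvalue $\lambda_{\max}(\rho_{AB})$. Because the maximization in $F$ ranges only over the proper subset $S$ of all unit vectors, this gives $F(\rho_{AB}) \le \lambda_{\max}(\rho_{AB}) = ||\rho_{AB}||_{O}$, which is precisely the inequality recorded in Eq.(\ref{fev}) and attributed to \cite{rui2010estimation}.

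Next I would invoke monotonicity. Since $\log_{2}$ is strictly increasing on $(0,\infty)$, the inequality $F(\rho_{AB}) \le ||\rho_{AB}||_{O}$ yields $\log_{2}F(\rho_{AB}) \le \log_{2}||\rho_{AB}||_{O}$. Multiplying by $-1$ reverses the inequality, and substituting the definition $S_{\infty}(AB) = -\log_{2}||\rho_{AB}||_{O}$ from Eq.(\ref{met}) produces exactly $S_{\infty}(AB) \le -\log_{2}F(\rho_{AB})$, as claimed.

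The only point worth guarding is the degenerate case $F(\rho_{AB}) = 0$, where $\log_{2}F(\rho_{AB})$ fails to be finite. I would dispose of it by noting that then $-\log_{2}F(\rho_{AB}) = +\infty$, so the inequality holds trivially, while for every state with $F(\rho_{AB}) > 0$ the chain above applies verbatim. Thus there is no genuine analytic obstacle: the theorem is entirely a matter of the eigenvalue bound plus monotonicity bookkeeping, and the ``hard part,'' such as it is, amounts only to recognizing that restricting the overlap maximization to maximally entangled states cannot exceed the unrestricted maximum over all unit vectors.
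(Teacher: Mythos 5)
Your proof is correct and follows essentially the same route as the paper: both rest on the bound $F(\rho_{AB}) \leq ||\rho_{AB}||_{O}$ followed by monotonicity of $\log_{2}$ and a sign flip. The only difference is that you derive the eigenvalue bound directly from the Rayleigh-quotient characterization of $\lambda_{\max}$, whereas the paper simply cites it from the literature; your version is self-contained and your handling of the degenerate case $F(\rho_{AB})=0$ is a harmless extra precaution.
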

  The conditional Min-R\'enyi entropy of the two-qubit state is given by 
  \begin{align}{\label{cmet}}
      S_{\infty}(A|B) = \log_{2}\frac{||\rho_{B}||_{O}}{||\rho_{AB}||_{O}}
  \end{align}
  where $||\rho_{AB}||_{O}$ and $||\rho_{B}||_{O}$ are the largest eigenvalues of the state $\rho_{AB}$ and $\rho_{B}$, respectively. Using Eq.(\ref{cmet}) and Eq.(\ref{fev}), we have: $ F(\rho_{AB})\leq ||\rho_{AB}||_{O}\Leftrightarrow \frac{F(\rho_{AB})}{||\rho_{B}||_{O}} \leq \frac{||\rho_{AB}||_{O}}{||\rho_{B}||_{O}} \Leftrightarrow \log_{2}\frac{||\rho_{B}||_{O}}{||\rho_{AB}||_{O}} \leq \log_{2}\frac{||\rho_{B}||_{O}}{F(\rho_{AB})} \Leftrightarrow S_{\infty}(A|B) \leq \log_{2}\frac{||\rho_{B}||_{O}}{F(\rho_{AB})}$. So, we obtained the following result.
  \begin{theorem}\label{tfcme}
      For any general two-qubit state $\rho_{AB}$, the fidelity of entanglement and conditional Min-R\'enyi entropy satisfy $ S_{\infty}(A|B) \leq -\log_{2}\frac{F(\rho_{AB})}{||\rho_{B}||_{O}}.$
  \end{theorem}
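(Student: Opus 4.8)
The plan is to derive the inequality directly from the two ingredients already assembled in the excerpt: the defining expression for the conditional Min-R\'enyi entropy in Eq.(\ref{cmet}), namely $S_{\infty}(A|B) = \log_{2}\tfrac{||\rho_{B}||_{O}}{||\rho_{AB}||_{O}}$, and the fidelity bound in Eq.(\ref{fev}), $F(\rho_{AB}) \leq ||\rho_{AB}||_{O}$. Since both operator norms $||\rho_{AB}||_{O}$ and $||\rho_{B}||_{O}$ are largest eigenvalues of bona fide density matrices, they are strictly positive, which legitimises the divisions and logarithms that follow. I would first note the elementary identity $-\log_{2}\tfrac{F(\rho_{AB})}{||\rho_{B}||_{O}} = \log_{2}\tfrac{||\rho_{B}||_{O}}{F(\rho_{AB})}$, so that the claimed bound is equivalent to $S_{\infty}(A|B) \leq \log_{2}\tfrac{||\rho_{B}||_{O}}{F(\rho_{AB})}$; it therefore suffices to establish this latter form.

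Next I would start from Eq.(\ref{fev}), which itself rests on the result of \cite{rui2010estimation} that the fidelity of entanglement is bounded above by the largest eigenvalue of the state. Dividing both sides by the positive number $||\rho_{B}||_{O}$ preserves the inequality and yields $\tfrac{F(\rho_{AB})}{||\rho_{B}||_{O}} \leq \tfrac{||\rho_{AB}||_{O}}{||\rho_{B}||_{O}}$. Taking reciprocals of two positive quantities reverses the inequality, giving $\tfrac{||\rho_{B}||_{O}}{||\rho_{AB}||_{O}} \leq \tfrac{||\rho_{B}||_{O}}{F(\rho_{AB})}$, and applying the monotone map $\log_{2}(\cdot)$ then produces $\log_{2}\tfrac{||\rho_{B}||_{O}}{||\rho_{AB}||_{O}} \leq \log_{2}\tfrac{||\rho_{B}||_{O}}{F(\rho_{AB})}$.

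Finally I would identify the left-hand side with $S_{\infty}(A|B)$ via the definition in Eq.(\ref{cmet}), and rewrite the right-hand side as $-\log_{2}\tfrac{F(\rho_{AB})}{||\rho_{B}||_{O}}$, which closes the argument. There is no genuine obstacle here: the statement is a one-line consequence of the positivity of the operator norms together with the monotonicity of $\log_{2}$, once Eq.(\ref{fev}) is in hand. The only point demanding a little care is bookkeeping of the inequality direction, since dividing and then taking reciprocals flips it exactly once while the subsequent logarithm preserves it, so that the final bound emerges with the correct sign, in complete parallel with the analogous computation carried out for the unconditional Min-R\'enyi entropy in Theorem \ref{tfme}.
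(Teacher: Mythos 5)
Your argument is correct and is essentially identical to the paper's own derivation, which likewise starts from $F(\rho_{AB}) \leq ||\rho_{AB}||_{O}$ (Eq.(\ref{fev})), divides by $||\rho_{B}||_{O}$, takes reciprocals and then $\log_{2}$, and identifies the left-hand side with $S_{\infty}(A|B)$ via Eq.(\ref{cmet}). Your added remarks on the strict positivity of the operator norms and on tracking the single inequality reversal only make the bookkeeping more explicit.
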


  Now, we investigate the relation between fidelity of entanglement strictly greater than $\frac{1}{2}$, min-R\'enyi entropy, and conditional min-R\'enyi entropy for a general two-qubit state. If $F(\rho_{AB})>1/2$, then we have the following inequality
  \begin{align}\label{mef}
      -\log_{2}F(\rho_{AB})<1
  \end{align}
  From Theorem (\ref{tfme}) and Eq.(\ref{mef}), we obtain the following result
  \begin{theorem}
      If the fidelity of entanglement of the general two-qubit state is greater than half, i.e., $F(\rho_{AB})>1/2$, then its Min-R\'enyi entropy satisfies $S_{\infty}(AB)<1.$
  \end{theorem}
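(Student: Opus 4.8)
The plan is to obtain this statement as an immediate corollary of the inequality already established in Theorem~\ref{tfme}, together with the elementary bound recorded in Eq.~(\ref{mef}). No fresh machinery is needed; the entire analytic content has been front-loaded into Theorem~\ref{tfme}, and the present statement is a clean consequence of chaining two inequalities.

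First I would translate the hypothesis into a bound on $-\log_{2}F(\rho_{AB})$. Since $\log_{2}$ is strictly increasing, the assumption $F(\rho_{AB}) > 1/2$ gives $\log_{2}F(\rho_{AB}) > \log_{2}(1/2) = -1$, and hence $-\log_{2}F(\rho_{AB}) < 1$. This is precisely the inequality recorded in Eq.~(\ref{mef}).

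Next I would invoke Theorem~\ref{tfme}, which asserts $S_{\infty}(AB) \le -\log_{2}F(\rho_{AB})$ for every two-qubit state $\rho_{AB}$. Recall that this inequality is itself built on the estimate $F(\rho_{AB}) \le ||\rho_{AB}||_{O}$ taken from \cite{rui2010estimation}---the fidelity of entanglement cannot exceed the largest eigenvalue of the state---combined with the defining relation $S_{\infty}(AB) = -\log_{2}||\rho_{AB}||_{O}$. Chaining the two bounds then yields $S_{\infty}(AB) \le -\log_{2}F(\rho_{AB}) < 1$, so that $S_{\infty}(AB) < 1$, as claimed.

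The only point worth flagging---and it is minor---is the interplay of the weak inequality $\le$ supplied by Theorem~\ref{tfme} with the strict inequality $<$ supplied by Eq.~(\ref{mef}): their composition is strict, which is what delivers the strict conclusion $S_{\infty}(AB) < 1$ rather than a weak one. Beyond this bookkeeping there is no genuine obstacle, since the substantive work resides entirely in Theorem~\ref{tfme} and the cited eigenvalue estimate.
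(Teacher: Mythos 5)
Your proposal is correct and matches the paper's own derivation exactly: the paper likewise obtains this theorem by combining Theorem~\ref{tfme} ($S_{\infty}(AB)\le -\log_{2}F(\rho_{AB})$) with Eq.~(\ref{mef}) ($F(\rho_{AB})>1/2 \Rightarrow -\log_{2}F(\rho_{AB})<1$). Your remark that chaining the weak and strict inequalities yields the strict conclusion is a fair (and correct) piece of bookkeeping that the paper leaves implicit.
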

  If $F(\rho_{AB})>1/2$, then we have the following inequality
  \begin{align}\label{cmef}
      -\log_{2}\frac{F(\rho_{AB})}{||\rho_{B}||_{O}} < \log_{2}2||\rho_{B}||_{O}
  \end{align}
  From Theorem(\ref{tfcme}) and Eq.(\ref{cmef}), we obtain the following result,
  \begin{theorem}
      If the fidelity of entanglement of the general two-qubit state is greater than half, i.e., $F(\rho_{AB})>1/2$, then its conditional min-R\'enyi entropy satisfies $S_{\infty}(A|B)<\log_{2}2||\rho_{B}||_{O}.$
  \end{theorem}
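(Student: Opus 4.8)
The plan is to obtain the stated bound by chaining the already-established inequality of Theorem \ref{tfcme} with the elementary estimate recorded as Eq. \eqref{cmef}, so no new machinery is required. First I would invoke Theorem \ref{tfcme}, which holds for every general two-qubit state and provides the unconditional bound
\begin{equation}
S_{\infty}(A|B) \leq -\log_{2}\frac{F(\rho_{AB})}{||\rho_{B}||_{O}}.
\end{equation}
This upper bound on the conditional min-R\'enyi entropy is the only structural input I need; all the work of relating $S_{\infty}(A|B)$ to the fidelity of entanglement and the operator norm $||\rho_{B}||_{O}$ has already been carried out there.

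Next I would exploit the hypothesis $F(\rho_{AB})>1/2$ to sharpen the right-hand side. Since $||\rho_{B}||_{O}>0$, the assumption $F(\rho_{AB})>\tfrac{1}{2}$ yields $\frac{F(\rho_{AB})}{||\rho_{B}||_{O}}>\frac{1}{2||\rho_{B}||_{O}}$. The function $x\mapsto-\log_{2}x$ is strictly decreasing on the positive reals, so applying it reverses the inequality and produces precisely Eq. \eqref{cmef}, namely $-\log_{2}\frac{F(\rho_{AB})}{||\rho_{B}||_{O}}<\log_{2}\bigl(2||\rho_{B}||_{O}\bigr)$. The only point demanding care is the reversal of the inequality direction under the negative logarithm; since the argument is manifestly positive, taking the logarithm is legitimate and the manipulation is valid.

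Finally, I would combine the two displays by transitivity: the first gives $S_{\infty}(A|B)\leq-\log_{2}\frac{F(\rho_{AB})}{||\rho_{B}||_{O}}$, and the second bounds that middle quantity strictly above by $\log_{2}\bigl(2||\rho_{B}||_{O}\bigr)$, so that $S_{\infty}(A|B)<\log_{2}\bigl(2||\rho_{B}||_{O}\bigr)$, as claimed. There is no substantive obstacle in this argument; it is a two-step concatenation of a previously proven inequality with a monotonicity observation, so the main thing to get right is simply the bookkeeping of the inequality signs and the positivity of $||\rho_{B}||_{O}$ that licenses the logarithmic step.
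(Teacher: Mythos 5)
Your proposal is correct and follows exactly the paper's own route: it chains Theorem \ref{tfcme} with the inequality $-\log_{2}\frac{F(\rho_{AB})}{||\rho_{B}||_{O}} < \log_{2}\bigl(2||\rho_{B}||_{O}\bigr)$ obtained from the hypothesis $F(\rho_{AB})>1/2$ (Eq. \eqref{cmef}), concluding by transitivity. No differences of substance to report.
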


		\subsubsection{{\textbf{Relation with Tsallis entropy}}}
   In this subsection, we investigate the relation between Tsallis entropy and the fidelity of entanglement of the general two-qubit state.
  Tsallis $2$-entropy of the general two-qubit state $\rho_{AB}$ is given by 
  \begin{align}
      S_{2}^{\mathbb{T}}(AB) = \frac{3-||\vec{a}||^{2}-||\vec{b}||^{2}-||T||_{2}^{2}}{4}
  \end{align} 
  where $||\cdot||$ is Euclidean norm and $||T||_{2}^{2}=\text{Tr}(T^{\dagger}T)$.
  \begin{theorem}
      The fidelity of entanglement of the general two-qubit state $\rho_{AB}$ is strictly greater than half, i.e., $F(\rho_{AB})>1/2$ iff $S_{2}^{\mathbb{T}}(AB)<\eta$ where $\eta=\frac{2-||\vec{a}||^2-||\vec{b}||^2+R}{4}.$
  \end{theorem}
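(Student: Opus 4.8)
The plan is to mirror the proof strategy already used for the R\'enyi $2$-entropy theorems, anchoring everything on the biconditional supplied by Theorem~\ref{tf}, namely $F(\rho_{AB}) > 1/2 \Longleftrightarrow ||T||_{2}^{2} > 1-R$. Since the Tsallis $2$-entropy $S_{2}^{\mathbb{T}}(AB) = \frac{3-||\vec{a}||^{2}-||\vec{b}||^{2}-||T||_{2}^{2}}{4}$ is an affine function of $||T||_{2}^{2}$ with a negative leading coefficient, the entire argument reduces to a chain of reversible algebraic manipulations, exactly as in the preceding proofs.

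First I would start from the hypothesis $F(\rho_{AB}) > 1/2$ and invoke Theorem~\ref{tf} to replace it by the equivalent inequality $||T||_{2}^{2} > 1-R$. Next I would negate both sides and add the constant $3 - ||\vec{a}||^{2} - ||\vec{b}||^{2}$; this single negation flips the direction of the inequality and produces $3 - ||\vec{a}||^{2} - ||\vec{b}||^{2} - ||T||_{2}^{2} < 2 - ||\vec{a}||^{2} - ||\vec{b}||^{2} + R$. Dividing both sides by $4$, a positive constant that preserves the direction, yields precisely $S_{2}^{\mathbb{T}}(AB) < \eta$ with $\eta = \frac{2 - ||\vec{a}||^{2} - ||\vec{b}||^{2} + R}{4}$.

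Because each of these steps is an equivalence, Theorem~\ref{tf} is itself a biconditional, and the remaining operations (negation, adding a fixed constant, dividing by $4$) are all invertible while tracking the lone sign flip, the forward and reverse implications hold simultaneously, which establishes the claimed iff. No further structural information about $\rho_{AB}$ is required beyond the norm expression for $S_{2}^{\mathbb{T}}(AB)$ and the already-proved Theorem~\ref{tf}.

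The only point demanding care, rather than a genuine obstacle, is bookkeeping the inequality direction: multiplying through by $-1$ converts "$>$" into "$<$", which is exactly what turns the lower bound on $||T||_{2}^{2}$ into the asserted upper bound $\eta$ on $S_{2}^{\mathbb{T}}(AB)$. In essence the result is a restatement of Theorem~\ref{tf} expressed in the coordinate given by the Tsallis $2$-entropy, so the proof is expected to be short and purely computational.
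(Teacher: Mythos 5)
Your proposal is correct and matches the paper's own proof: both reduce the claim to the biconditional of Theorem~\ref{tf} and then pass to $S_{2}^{\mathbb{T}}(AB)<\eta$ by a chain of reversible algebraic steps (sign flip, adding a constant, dividing by $4$). The only difference is cosmetic ordering of the intermediate manipulations.
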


  \begin{proof}
    \smallskip
    Let $F(\rho_{AB}) > 1/2$ 
    
    \smallskip
    $\Longleftrightarrow ||T||_{2}^{2} > 1 - R$, \ using Theorem (\ref{tf})
    
    \smallskip
    $\Longleftrightarrow \frac{1 + ||\vec{a}||^{2} + ||\vec{b}||^{2} + ||T||_{2}^{2}}{4} > \frac{2 + ||\vec{a}||^{2} + ||\vec{b}||^{2} - R}{4}$
    
    \smallskip
    $\Longleftrightarrow \frac{3 - ||\vec{a}||^{2} - ||\vec{b}||^{2} - ||T||_{2}^{2}}{4} < \frac{2 - ||\vec{a}||^{2} - ||\vec{b}||^{2} + R}{4}$
    
    \smallskip
    $\Longleftrightarrow S_{2}^{\mathbb{T}}(AB) < \frac{2 - ||\vec{a}||^2 - ||\vec{b}||^2 + R}{4}.$
    
    \renewcommand{\qedsymbol}{}
\end{proof}

Conditional Tsallis $2$-entropy of general two-qubit state $\rho_{AB}$ is given by 
\begin{align}
    S_{2}^{\mathbb{T}}(A|B)=\frac{1-||\vec{a}||^{2}+||\vec{b}||^{2}-||T||_{2}^{2}}{4}
\end{align}
  \begin{theorem}
      The fidelity of entanglement of the general two-qubit state $\rho_{AB}$ is strictly greater than half, i.e., $F(\rho_{AB})>1/2$ iff $S_{2}^{\mathbb{T}}(A|B)<\Lambda$ where $\Lambda=\frac{||\vec{b}||^{2}-||\vec{a}||^{2}+R}{4}$
  \end{theorem}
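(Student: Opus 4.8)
The plan is to reproduce the same chain-of-equivalences strategy used in the proofs of the preceding Tsallis and R\'enyi theorems, anchoring everything on Theorem (\ref{tf}). First I would note that, by Theorem (\ref{tf}), the hypothesis $F(\rho_{AB})>1/2$ is equivalent to $||T||_{2}^{2} > 1-R$. Since every step in what follows is reversible, a single string of biconditionals handles both directions of the stated ``iff'' simultaneously, so there is no need to argue necessity and sufficiency separately.

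Next I would rearrange $||T||_{2}^{2} > 1-R$ so that its left-hand side matches the given formula for the conditional Tsallis $2$-entropy, $S_{2}^{\mathbb{T}}(A|B)=\frac{1-||\vec{a}||^{2}+||\vec{b}||^{2}-||T||_{2}^{2}}{4}$. Multiplying the inequality by $-1$ flips it to $-||T||_{2}^{2} < R-1$, and then adding the fixed quantity $1-||\vec{a}||^{2}+||\vec{b}||^{2}$ to both sides yields
\begin{align*}
1-||\vec{a}||^{2}+||\vec{b}||^{2}-||T||_{2}^{2} < ||\vec{b}||^{2}-||\vec{a}||^{2}+R.
\end{align*}
Dividing throughout by $4$ turns the left-hand side into exactly $S_{2}^{\mathbb{T}}(A|B)$ and the right-hand side into $\Lambda=\frac{||\vec{b}||^{2}-||\vec{a}||^{2}+R}{4}$, giving $S_{2}^{\mathbb{T}}(A|B)<\Lambda$ as required.

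I do not anticipate any genuine obstacle here, since the entire argument is one reversible affine manipulation: reading the chain backwards from $S_{2}^{\mathbb{T}}(A|B)<\Lambda$ recovers $||T||_{2}^{2}>1-R$ and hence, via Theorem (\ref{tf}), $F(\rho_{AB})>1/2$. The only point that demands care is the algebraic bookkeeping---correctly tracking the sign on the $||\vec{a}||^{2}$ term and the transition from $1-R$ to the $+R$ appearing in $\Lambda$---so that the final threshold agrees exactly with the stated constant rather than a sign-flipped variant.
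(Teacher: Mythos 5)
Your proposal is correct and follows essentially the same route as the paper: both reduce the claim to $||T||_{2}^{2}>1-R$ via Theorem (\ref{tf}) and then perform a reversible affine rearrangement to arrive at $S_{2}^{\mathbb{T}}(A|B)<\Lambda$. The only difference is cosmetic bookkeeping in the intermediate inequality.
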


  \begin{proof}
    \smallskip
    Let $F(\rho_{AB}) > 1/2$ 
    
    \smallskip
    $\Longleftrightarrow ||T||_{2}^{2} > 1 - R$, \ using Theorem (\ref{tf})
    
    \smallskip
    $\Longleftrightarrow \frac{1 + ||\vec{a}||^{2} + ||\vec{b}||^{2} + ||T||_{2}^{2}}{4} > \frac{2 + ||\vec{a}||^{2} + ||\vec{b}||^{2} - R}{4}$
    
    \smallskip
    $\Longleftrightarrow \frac{1 - ||\vec{a}||^{2} + ||\vec{b}||^{2} - ||T||_{2}^{2}}{4} < \frac{||\vec{b}||^{2} - ||\vec{a}||^{2} + R}{4}$
    
    \smallskip
    $\Longleftrightarrow S_{2}^{\mathbb{T}}(A|B) < \frac{||\vec{b}||^{2} - ||\vec{a}||^{2} + R}{4}$
    
    \renewcommand{\qedsymbol}{}
\end{proof}

  \subsection{Two-qubit Weyl state}
  Correlation matrix of two-qubit Weyl state is given by $T=\text{diagonal}\{t_{11}, t_{22}, t_{33}\}$. So, $||T||_{1}=\text{Tr}(|T|)=|t_{11}| + |t_{22}|+ |t_{33}|.$ Here, we investigate the relation between fidelity of entanglement and various entropies.
  \subsubsection{\textbf{Relation with R\'enyi entropy}}
  R\'eny $2$-entropy of two-qubit Weyl state is given by 
  \begin{align}{\label{rew1}}
      S_{2}(AB) = -\log_{2}[\frac{1+t_{11}^{2}+t_{22}^{2}+t_{33}^{2}}{4}]
  \end{align}
  Using Eq.(\ref{fgh}) and Eq.(\ref{rew1}), we have the following observation.
  \begin{observation}
      The fidelity of entanglement of two-qubit Weyl state $\rho_{AB}$ is greater than half, i.e., $F(\rho_{AB})>1/2$ iff $S_{2}(AB)<\log_{2}\frac{2}{1-\Omega}$ where $\Omega=|t_{11}||t_{22}|+|t_{11}||t_{33}|+|t_{22}||t_{33}|$ and $0<\Omega<1$. 
  \end{observation}
  Conditional R\'enyi $2$-entropy of two-qubit Weyl state is given by
  \begin{align}{\label{rew2}}
      S_{2}(A|B) = 1 - \log_{2}(1+t_{11}^{2}+t_{22}^{2}+t_{33}^{2})
  \end{align}
  Using Eq.(\ref{fgh}) and Eq.(\ref{rew2}), we have the following observation.
  \begin{observation}
      The fidelity of entanglement of two-qubit Weyl state $\rho_{AB}$ is greater than half, i.e., $F(\rho_{AB})>1/2$ iff $S_{2}(A|B)<\log_{2}\frac{1}{1-\Omega}$ where $\Omega=|t_{11}||t_{22}|+|t_{11}||t_{33}|+|t_{22}||t_{33}|$ and $0<\Omega<1$. 
  \end{observation}
  Now, we will see the relation between min-R\'enyi entropy and faithful entanglement. Since min-R\'enyi entropy is given by 
  \begin{align}
      S_{\infty}(AB) = -\log_{2}||\rho_{AB}||_{O}
  \end{align}
  where $||\cdot||_{O}$ denotes the operator norm, i.e., $||\rho_{AB}||_{O}$ is the largest eigenvalue of $\rho_{AB}$. In \cite{quan2016steering}, for $|t_{3}|\leq t_{2}\leq t_{1}\leq 1$ the spectrum of two-qubit Weyl state $\rho_{AB}$ in non decreasing order is given as $\frac{1}{4}(1-t_{1}-t_{2}-t_{3}), \frac{1}{4}(1-t_{1}+t_{2}+t_{3}), \frac{1}{4}(1+t_{1}-t_{2}+t_{3}), \frac{1}{4}(1+t_{1}+t_{2}-t_{3})$. The smallest and largest eigenvalues of $\rho_{AB}$ is given by $\lambda_{\text{min}}=\frac{1}{4}(1-t_{1}-t_{2}-t_{3})$ and $\lambda_{\text{max}}=\frac{1}{4}(1+t_{1}+t_{2}-t_{3}).$ From \cite{riccardi2021exploring,guhne2021geometry}, we know that the fidelity of entanglement of the two-qubit Weyl state is greater than half if and only if its largest eigenvalue is strictly greater than $\frac{1}{2}$. Thus, we have the following observation for the two-qubit Weyl state
  \begin{observation}
      The fidelity of entanglement of the two-qubit Weyl state $\rho_{AB}$ is greater than half if and only if $S_{\infty}(AB)<1.$
  \end{observation}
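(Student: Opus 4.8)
The plan is to reduce the claimed equivalence to a spectral criterion already recorded in the text. First I would unfold the definition of the min-R\'enyi entropy, $S_{\infty}(AB) = -\log_{2}||\rho_{AB}||_{O}$, and translate the target inequality into a condition on the operator norm. Since $\log_{2}$ is strictly increasing, $S_{\infty}(AB) < 1$ is equivalent to $-\log_{2}||\rho_{AB}||_{O} < 1$, hence to $\log_{2}||\rho_{AB}||_{O} > -1$, i.e. $||\rho_{AB}||_{O} > \frac{1}{2}$. Because $\rho_{AB}$ is Hermitian and positive semidefinite, its operator norm equals its largest eigenvalue, so this condition is precisely $\lambda_{\max} > \frac{1}{2}$, where $\lambda_{\max} = \frac{1}{4}(1+t_{1}+t_{2}-t_{3})$ under the ordering $|t_{3}| \leq t_{2} \leq t_{1} \leq 1$ recorded above.

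Second, I would invoke the characterization cited from \cite{riccardi2021exploring,guhne2021geometry}, namely that the fidelity of entanglement of a two-qubit Weyl state exceeds $\frac{1}{2}$ if and only if its largest eigenvalue is strictly greater than $\frac{1}{2}$. Chaining the two equivalences then gives $F(\rho_{AB}) > \frac{1}{2} \Longleftrightarrow \lambda_{\max} > \frac{1}{2} \Longleftrightarrow S_{\infty}(AB) < 1$, which is exactly the claimed observation.

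The main obstacle is essentially nonexistent: the argument is a one-line logarithmic manipulation fused with an already-established spectral criterion. The only points requiring care are that every inequality be kept strict, so as to match the ``strictly greater'' clause in the cited eigenvalue characterization, and that the identification $||\rho_{AB}||_{O} = \lambda_{\max}$ be justified from the Hermiticity and positivity of the density operator rather than simply assumed.
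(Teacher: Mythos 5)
Your proposal is correct and follows essentially the same route as the paper's own proof: both reduce $S_{\infty}(AB)<1$ to $\lambda_{\max}=\|\rho_{AB}\|_{O}>\tfrac{1}{2}$ via monotonicity of $\log_{2}$, and then invoke the cited spectral characterization $F(\rho_{AB})>\tfrac{1}{2}\Longleftrightarrow\lambda_{\max}>\tfrac{1}{2}$ for two-qubit Weyl states. Your added care about strict inequalities and the identification of the operator norm with the largest eigenvalue is a welcome but minor refinement of the same argument.
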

  \begin{proof}
    \smallskip
    Let $F(\rho_{AB}) > 1/2$
    $\Longleftrightarrow \lambda_{\text{max}} > \frac{1}{2}$
    
    \smallskip
    $\Longleftrightarrow \log_{2}(\lambda_{\text{max}}) > \log_{2} \left( \frac{1}{2} \right) = -1$
    
    \smallskip
    $\Longleftrightarrow -\log_{2}(\lambda_{\text{max}}) < 1$
    $\Longleftrightarrow -\log_{2} ||\rho_{AB}||_{O} < 1$
    
    \smallskip
    $\Longleftrightarrow$ \quad $ S_{\infty}(AB) < 1.$
    
    \renewcommand{\qedsymbol}{}
\end{proof}

  Conditional min-R\'enyi entropy of two-qubit Weyl state $\rho_{AB}$ is given by $S_{\infty}(A|B) = -\log_{2}(2\lambda_{\text{max}})$. 
  \begin{observation}
      The fidelity of entanglement of the two-qubit Weyl state is greater than half if and only if $S_{\infty}(A|B)<0.$
  \end{observation}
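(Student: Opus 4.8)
The plan is to reduce the claim to the characterisation $F(\rho_{AB}) > 1/2 \iff \lambda_{\text{max}} > 1/2$ for the two-qubit Weyl state, which is imported from \cite{riccardi2021exploring,guhne2021geometry} and stated just above, and then to show that the condition $S_{\infty}(A|B) < 0$ is itself exactly equivalent to $\lambda_{\text{max}} > 1/2$ by a direct manipulation of the logarithm. Since the assertion is a biconditional, I would organise the whole argument as a single chain of reversible "if and only if" steps, so that both directions are established at once.

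First I would recall the explicit formula for the conditional min-R\'enyi entropy of the two-qubit Weyl state, namely $S_{\infty}(A|B) = -\log_{2}(2\lambda_{\text{max}})$, where $\lambda_{\text{max}}$ is the largest eigenvalue of $\rho_{AB}$. Starting from the inequality $S_{\infty}(A|B) < 0$ and substituting this expression gives $-\log_{2}(2\lambda_{\text{max}}) < 0$, equivalently $\log_{2}(2\lambda_{\text{max}}) > 0$, which by strict monotonicity of $\log_{2}$ is equivalent to $2\lambda_{\text{max}} > 1$, i.e.\ $\lambda_{\text{max}} > 1/2$. Each of these passages is an equivalence, so no information is lost in either direction.

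Finally I would invoke the cited result that, for the two-qubit Weyl state, $F(\rho_{AB}) > 1/2$ holds precisely when $\lambda_{\text{max}} > 1/2$. Chaining this with the equivalences above yields $F(\rho_{AB}) > 1/2 \iff \lambda_{\text{max}} > 1/2 \iff S_{\infty}(A|B) < 0$, which is exactly the claimed statement.

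There is no genuine analytical obstacle here: every step is a reversible algebraic or logarithmic equivalence, and the only substantive input---the equivalence between $F(\rho_{AB}) > 1/2$ and $\lambda_{\text{max}} > 1/2$---is taken from the literature and was recorded immediately before this observation. The one point worth stating carefully is that the entropy threshold $0$ matches the fidelity threshold precisely because $-\log_{2}\!\left(2 \cdot \tfrac{1}{2}\right) = 0$; this numerical coincidence of the two thresholds is what makes the biconditional clean and exact rather than merely an implication.
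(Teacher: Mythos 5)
Your proposal is correct and follows essentially the same route as the paper: both arguments chain the identity $S_{\infty}(A|B) = -\log_{2}(2\lambda_{\text{max}})$ with the cited equivalence $F(\rho_{AB}) > 1/2 \Longleftrightarrow \lambda_{\text{max}} > 1/2$ through reversible logarithmic manipulations. The only cosmetic difference is that you run the chain starting from the entropy condition while the paper starts from the fidelity condition, which is immaterial since every step is a biconditional.
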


  \begin{proof}
    \smallskip
    Let $F(\rho_{AB}) > 1/2$
    $\Longleftrightarrow \lambda_{\text{max}} > \frac{1}{2}$
    
    \smallskip
    $\Longleftrightarrow \log_{2}(\lambda_{\text{max}}) > \log_{2} \left( \frac{1}{2} \right) = -\log_{2} 2$
    
    \smallskip
    $\Longleftrightarrow -\log_{2} (2\lambda_{\text{max}}) < 0$ $\Longleftrightarrow S_{\infty}(A|B) < 0$
    
    \renewcommand{\qedsymbol}{}
\end{proof}
 
  \subsubsection{\textbf{Relation with Tsallis entropy}}
  Tsallis $2-$entropy of two-qubit Weyl $\rho_{AB}$ state is given as
  \begin{equation}\label{ewt}
      S_{2}^{\mathbb{T}}(AB)=\frac{3-t_{11}^{2}-t_{22}^{2}-t_{33}^{2}}{4}
  \end{equation}
  Using Eq.(\ref{fgh}) and Eq.(\ref{ewt}), we have the following observation.
  \begin{observation}
      The fidelity of entanglement of the two-qubit Weyl state $\rho_{AB}$ is greater than half if and only if $S_{2}^{\mathbb{T}}(AB)<\frac{1+\eta}{2}$ where $\eta=|t_{11}||t_{22}|+|t_{11}||t_{33}|+|t_{22}||t_{33}|.$
  \end{observation}
  Conditional Tsallis $2-$entropy of two-qubit Weyl state is given as
  \begin{equation}\label{ewct}
      S_{2}^{\mathbb{T}}(A|B)=\frac{1-t_{11}^{2}-t_{22}^{2}-t_{33}^{2}}{4}
  \end{equation}
  Using Eq.(\ref{fgh}) and Eq.(\ref{ewct}), we have the following observation.
  \begin{observation}
      The fidelity of entanglement of the two-qubit Weyl state $\rho_{AB}$ is greater than half if and only if $S_{2}^{\mathbb{T}}(A|B)<\frac{\eta}{2}$ where $\eta=|t_{11}||t_{22}|+|t_{11}||t_{33}|+|t_{22}||t_{33}|.$
  \end{observation}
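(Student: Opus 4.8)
The plan is to establish the equivalence through a chain of reversible steps, mirroring the structure of the preceding observations on the Weyl state: I would start from the fidelity criterion of Eq.~(\ref{fgh}) and transform it, step by step, into the stated bound on the conditional Tsallis $2$-entropy recorded in Eq.~(\ref{ewct}). The only state-specific fact required is that the correlation matrix of the two-qubit Weyl state is diagonal, so that $||T||_{1}=|t_{11}|+|t_{22}|+|t_{33}|$.

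First I would use Eq.~(\ref{fgh}) to rewrite $F(\rho_{AB})>1/2$ as $||T||_{1}>1$, that is $|t_{11}|+|t_{22}|+|t_{33}|>1$. The central move is to square this inequality. Expanding the square produces
\begin{align}
    t_{11}^{2}+t_{22}^{2}+t_{33}^{2}+2\left(|t_{11}||t_{22}|+|t_{11}||t_{33}|+|t_{22}||t_{33}|\right)>1,
\end{align}
which, writing $\eta=|t_{11}||t_{22}|+|t_{11}||t_{33}|+|t_{22}||t_{33}|$, becomes $t_{11}^{2}+t_{22}^{2}+t_{33}^{2}+2\eta>1$. Isolating the remaining terms gives $1-(t_{11}^{2}+t_{22}^{2}+t_{33}^{2})<2\eta$, and dividing through by $4$ yields $\frac{1-t_{11}^{2}-t_{22}^{2}-t_{33}^{2}}{4}<\frac{\eta}{2}$. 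Recognizing the left-hand side as $S_{2}^{\mathbb{T}}(A|B)$ via Eq.~(\ref{ewct}) completes the equivalence. I note that this is exactly the computation underlying the preceding observation on $S_{2}^{\mathbb{T}}(AB)$, only with the constant $3$ in the numerator replaced by $1$, which shifts the threshold from $\frac{1+\eta}{2}$ to $\frac{\eta}{2}$.

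The argument is short and carries no deep obstacle; the single point demanding care is the squaring step. Since the trace norm is nonnegative, $||T||_{1}>1$ holds if and only if $||T||_{1}^{2}>1$, so squaring is a genuine biconditional rather than a one-way implication. Were one to forget that $||T||_{1}\ge 0$, the ``only if'' direction would silently fail, so I would state this nonnegativity explicitly to keep every arrow reversible.
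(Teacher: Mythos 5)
Your proposal is correct and follows essentially the same route as the paper: the paper states this observation as an immediate consequence of Eq.~(\ref{fgh}) and Eq.~(\ref{ewct}), and the squaring step you carry out explicitly is just the specialization of Lemma~\ref{lm2} to the diagonal Weyl correlation matrix, whose singular values are $|t_{11}|,|t_{22}|,|t_{33}|$ so that $R=2\eta$. Your remark that nonnegativity of $||T||_{1}$ is what makes the squaring a genuine biconditional is a worthwhile point of care that the paper leaves implicit.
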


   \section{\texorpdfstring{Relation between relative Entropy and fidelity of entanglement in $d \otimes d$ systems}{Relation between relative Entropy and fidelity of entanglement in d x d systems}} \label{V}
   
   In this section, we explore the relation between relative entropy and fidelity of entanglement. The fidelity of entanglement of state $\rho_{AB}$ is given as 
   \begin{align}\label{mrf1}
        F(\rho_{AB})=\max\limits_{ U }\langle\phi|(U^{\dagger}\otimes \mathbb{I}) \rho_{AB} (U\otimes \mathbb{I}) |\phi\rangle
   \end{align}
   where $|\phi\rangle=\frac{1}{\sqrt{d}}\sum\limits_{i=0}^{d-1}|ii\rangle$, is maximally entangled state.
   Further simplifying Eq.(\ref{mrf1}), we get
   \begin{align}\label{mrf2}
        F(\rho_{AB})=\max\limits_{ U } \text{Tr}[\rho_{AB}(U\otimes \mathbb{I} )\rho_{\phi}(U^{\dagger}\otimes \mathbb{I})]
   \end{align}
   where $\rho_{\phi}=|\phi\rangle\langle\phi|.$
   Letting $(U\otimes \mathbb{I} )\rho_{\phi}(U^{\dagger}\otimes \mathbb{I})= \sigma_{AB}$, we consider the relative entropy \cite{ruskai2002inequalities, vedral1997quantifying, wilde2013quantum, nielsen2010quantum} denoted by $D(\sigma_{AB}||\rho_{AB})$, given as
   \begin{align}
       D(\sigma_{AB}||\rho_{AB})&= \text{Tr}[\sigma_{AB}(\log\sigma_{AB}-\log\rho_{AB})]\nonumber\\
       &=-\text{Tr}[\sigma_{AB}\log\rho_{AB}]~;~ (\text{Since, $\sigma_{AB}$ is pure}\Rightarrow~\text{Tr}[\sigma_{AB}\log\sigma_{AB}]=0) \nonumber\\
       &=-\text{Tr}[(U\otimes \mathbb{I} )\rho_{\phi}(U^{\dagger}\otimes \mathbb{I})\log\rho_{AB}]\nonumber\\
       &= - \text{Tr}[\log\rho_{AB}(U\otimes \mathbb{I} )\rho_{\phi}(U^{\dagger}\otimes \mathbb{I})])\nonumber
   \end{align}
   Now, we define a quantity $R(\rho_{AB})$ as
   \begin{align}\label{mrf3}
       R(\rho_{AB})=\max\limits_{ U }\{- \text{Tr}[\log\rho_{AB}(U\otimes \mathbb{I} )\rho_{\phi}(U^{\dagger}\otimes \mathbb{I})]\}
   \end{align}
   From Eq.(\ref{mrf2}), we have
   \begin{align}\label{mrf4}
       -F(\rho_{AB})=\max\limits_{ U } \{- \text{Tr}[\rho_{AB}(U\otimes \mathbb{I} )\rho_{\phi}(U^{\dagger}\otimes \mathbb{I})]\}
   \end{align}
   \begin{theorem}
       For density matrix $\rho_{AB}$, $R(\rho_{AB})$ and fidelity of entanglement $F(\rho_{AB})$ satisfy $R(\rho_{AB})\geq -F(\rho_{AB})$.
   \end{theorem}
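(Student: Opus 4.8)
The plan is to exploit the identification, already carried out in the excerpt, of the quantity maximized in $R(\rho_{AB})$ with a relative entropy. For each unitary $U$ write $\sigma_{AB}=(U\otimes I)\rho_{\phi}(U^{\dagger}\otimes I)$, a pure (maximally entangled) state. Since $\sigma_{AB}$ is pure we have $\text{Tr}[\sigma_{AB}\log\sigma_{AB}]=0$, so by cyclicity of the trace the integrand of $R$ at this $U$ equals $-\text{Tr}[\sigma_{AB}\log\rho_{AB}]=D(\sigma_{AB}\|\rho_{AB})$. Thus $R(\rho_{AB})=\max_{U}D(\sigma_{AB}\|\rho_{AB})$, while by definition $-F(\rho_{AB})=\max_{U}\{-\text{Tr}[\sigma_{AB}\rho_{AB}]\}$.

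First I would bound $R$ from below using nonnegativity of the quantum relative entropy (Klein's inequality): $D(\sigma_{AB}\|\rho_{AB})\ge 0$ for every $U$ (and even $=+\infty\ge 0$ in the degenerate case $\operatorname{supp}\sigma_{AB}\not\subseteq\operatorname{supp}\rho_{AB}$, which covers rank-deficient $\rho_{AB}$). Taking the maximum over $U$ gives $R(\rho_{AB})\ge 0$. Next I would bound $-F$ from above: the overlap $\text{Tr}[\sigma_{AB}\rho_{AB}]$ is the trace of a product of two positive semidefinite operators, hence nonnegative, so $-\text{Tr}[\sigma_{AB}\rho_{AB}]\le 0$ for every $U$ and therefore $-F(\rho_{AB})=\max_{U}\{-\text{Tr}[\sigma_{AB}\rho_{AB}]\}\le 0$. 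Chaining the two bounds yields $R(\rho_{AB})\ge 0\ge -F(\rho_{AB})$, which is the claim.

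An alternative, sharper route bypasses nonnegativity and instead proves the inequality termwise. Using the operator inequality $\log\rho_{AB}\le c(\rho_{AB}-I)$, valid for positive semidefinite $\rho_{AB}$ with $c=1/\ln 2>0$ (it follows from the scalar bound $\log x\le c(x-1)$ applied on the spectrum, since the two sides are simultaneously diagonalizable), one gets $-\text{Tr}[\sigma_{AB}\log\rho_{AB}]\ge c\,(1-\text{Tr}[\sigma_{AB}\rho_{AB}])\ge -\text{Tr}[\sigma_{AB}\rho_{AB}]$, where the last step uses $\text{Tr}[\sigma_{AB}\rho_{AB}]\in[0,1]$. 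This gives the pointwise domination of the $R$-integrand over the $(-F)$-integrand for each fixed $U$.

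In either route the only genuine point to handle carefully is the passage from a pointwise inequality to an inequality between the two separate maximizations, since the maximizers over $U$ need not coincide. This is resolved by the standard argument: if $f(U)\ge g(U)$ for all $U$ and $U^{*}$ attains $\max_{U}g$, then $\max_{U}f\ge f(U^{*})\ge g(U^{*})=\max_{U}g$. I expect this maximization bookkeeping to be the main (and mild) obstacle; the nonnegativity route of the second paragraph sidesteps it entirely by bounding both maxima against the common value $0$.
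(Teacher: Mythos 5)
Your proposal is correct, and your primary route is genuinely different from (and more elementary than) the paper's. The paper proves the pointwise bound $-\mathrm{Tr}[\log(\mathbb{A})\mathbb{B}]\ge-\mathrm{Tr}(\mathbb{A}\mathbb{B})$ in two steps: first a Jensen-type inequality $-\mathrm{Tr}[\log(\mathbb{A})\mathbb{B}]\ge-\log[\mathrm{Tr}(\mathbb{A}\mathbb{B})]$ from convexity of $-\log$, then the scalar bound $-\log x\ge 1-x\ge -x$ on $(0,1]$, and finally maximizes both sides over $U$. Your first route skips all of this: recognizing the $R$-integrand as $D(\sigma_{AB}\|\rho_{AB})\ge 0$ and the $(-F)$-integrand as minus a nonnegative overlap gives $R\ge 0\ge -F$ directly. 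This is cleaner, handles the rank-deficient case explicitly, and makes transparent something the paper's chain obscures: the inequality is weak precisely because both quantities straddle zero (indeed the paper's own intermediate step $1-\mathrm{Tr}(\mathbb{A}\mathbb{B})\ge -\mathrm{Tr}(\mathbb{A}\mathbb{B})$ is just $0\ge -x$ in disguise). Your second route, via the operator inequality $\log\rho_{AB}\le c(\rho_{AB}-I)$, is essentially a streamlined version of the paper's argument, replacing Jensen-then-scalar-bound with a single operator bound; your careful treatment of the maximization bookkeeping is also the right instinct, since the paper's Eq.~(64) writes $-F(\rho_{AB})=\max_U\{-\mathrm{Tr}[\cdots]\}$ where strictly it should be a minimum — a slip that is harmless here (one only needs $-F\le 0$, or the one-sided comparison you describe) but worth noting.
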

   \begin{proof}
       Consider a function $f(x)=-\log(x)$ that is a convex function for all $x>0$ \& $X$ is a random variable then, by Jensen's inequality \cite{durrett2019probability}, we have
       \begin{align}\label{j1}
           f[E(X)]\leq E[f(X)]
       \end{align}
       In quantum scenarios Eq.(\ref{j1}) becomes
       \begin{align}\label{j2}
           f[\text{Tr}(\mathbb{AB})]\leq\text{Tr}[f(\mathbb{A})\mathbb{B}]
       \end{align}
       where $\mathbb{A}$ is density matrix and $\mathbb{B}$ is positive semidefinite operator. Further simplifying Eq.(\ref{j2}), and putting the $f(x)=-\log(x)$ and due to its operator convexity, we obtain
       \begin{align}\label{j3}
           -\text{Tr}[\log(\mathbb{A})\mathbb{B}] \geq -\log[\text{Tr}(\mathbb{AB})]
       \end{align}
       Now, we will check the relation between $-\log[\text{Tr}(\mathbb{AB})]$ and $-\text{Tr}(\mathbb{AB})$.
       Since, $-\log(x)\geq -x+1$ for all $0<x\leq1$. The $\text{Tr}(\mathbb{AB})$ is expectation value such that $0<\text{Tr}(\mathbb{AB})\leq 1$ where $B\geq 0$ and $B$ is normalized. So, we have 
       \begin{align}\label{j4}
           -\log[\text{Tr}(\mathbb{AB})] &\geq  -\text{Tr}(\mathbb{AB}) +1 \nonumber\\
           &\geq -\text{Tr}(\mathbb{AB})
       \end{align}
       Using Eq.(\ref{j3}) and Eq.(\ref{j4}), we have 
       \begin{align}\label{j5}
           -\text{Tr}[\log(\mathbb{A})\mathbb{B}] \geq -\text{Tr}(\mathbb{AB})
       \end{align}
       Consider density matrices $\rho_{AB}$ and $(U\otimes \mathbb{I} )\rho_{\phi}(U^{\dagger}\otimes \mathbb{I})$ such that $\mathbb{A}=\rho_{AB}$ and $\mathbb{B}=(U\otimes \mathbb{I} )\rho_{\phi}(U^{\dagger}\otimes \mathbb{I}).$ \\
       Maximizing over unitary operator $U$ on the both side of Eq.(\ref{j5}), we obtain
       \begin{align}
           \max\limits_{ U }\{- \text{Tr}[\log\rho_{AB}(U\otimes \mathbb{I} )\rho_{\phi}(U^{\dagger}\otimes \mathbb{I})]\} \geq \max\limits_{ U } \{- \text{Tr}[\rho_{AB}(U\otimes \mathbb{I} )\rho_{\phi}(U^{\dagger}\otimes \mathbb{I})]\}
       \end{align}
       Hence, $R(\rho_{AB})\geq -F(\rho_{AB})$.
       \renewcommand{\qedsymbol}{}
   \end{proof}

\section{Conclusion}\label{VI}
  Certifying entanglement is a fundamental area of study, as entangled states serve as a crucial resource for various quantum information processing tasks \cite{ekert1991quantum,branciard2012one,pati2000minimum,hillery1999quantum,cleve1999share,bandyopadhyay2000teleportation,bennett1993teleporting}. Fidelity of entanglement and different measures of quantum entropy play a pivotal role in assessing and certifying the presence of entanglement in a given quantum state. In this work, we aim to examine the interplay between these two indicators of entanglement, both in state and channel regimes. First, we present a comprehensive analysis and characterization of channels that break the fidelity of entanglement of bipartite composite systems. We then introduce the concept of fidelity annihilating channels, which act on a bipartite system $B$ and reduce the fidelity of entanglement within the system $B$. We investigate the topological characterization and several fundamental properties of fidelity annihilating channels. Additionally, we provide examples of these channels, with a specific focus on analyzing the depolarizing channel. we then explore the relationship between $\mathbb{FBC}$ and $2$-locally $\mathbb{FAC}$ with other resource breaking quantum channels, specifically with entanglement annihilating channels ($\mathbb{EA}$), negative conditional entropy breaking channels ($\mathbb{NCEBC}$ ) and negative conditional entropy annihilating channels ($\mathbb{NCEAC}$).
  \par Transitioning from the channel level to the state level, we establish a connection between quantum entropies and the fidelity of entanglement for a general two-qubit state. We derive the upper bound for the R\'enyi $2$-entropy and the conditional R\'enyi $2$-entropy, providing a necessary and sufficient condition for a general two-qubit state to exhibit a fidelity of entanglement greater than $\frac{1}{2}$ of a general two-qubit state. We also observe that the min-R\'enyi entropy and conditional min-R\'enyi entropy are both upper-bounded when the state is useful for teleportation. Additionally, we present upper bounds for the Tsallis $2$-entropy and conditional Tsallis $2$-entropy of general two-qubit states, which ensure that the fidelity of entanglement is greater than $\frac{1}{2}$ and vice versa. Finally, we extend our study to explore the relationship between relative entropy and the fidelity of entanglement.
  \par Our study paves the way for several promising research directions. One potential avenue is to investigate the relationship between the fidelity of entanglement and various quantum entropies in higher-dimensional quantum states. Additionally, while we have provided a topological characterization of the fidelity annihilating channels, an important future direction could involve determining their Choi-Kraus representation, offering a deeper structural understanding of these channels.\\

\noindent{\bf{Acknowledgement:}} We thank Mir Alimuddin for valuable insights. B.M. acknowledges the DST INSPIRE fellowship program for financial support. N.G. acknowledges support from SERB(DST) MATRICS grant vide file number MTR/2022/000101. 
%NG acknowledges support from SERB-DST (India) MATRICS grant vide file number MTR/2022/000101.

\section*{Declaration}
\noindent All the authors contributed equally to the manuscript.\\
{\bf{Data availability:}} Data sharing was not applicable to this article as no data sets were generated or analyzed during the current study.\\
{\bf{Competing interests:}} The authors have no competing interests to declare. All co-authors have seen and agree with the contents of the manuscript, and there is no financial interest to report.

  \bibliography{main}

	\end{document}